\pgfplotsset{compat=1.17}
\begin{document}

\newcommand{\TODO}[1]{{\color{red}\texttt{TODO}: #1}}
\newcommand{\diaa}[2][]{\ifthenelse{\equal{#1}{}}{}{ \color{red} \sout{#1}} \color{blue} #2 \color{black}}

\newcommand{\DP}[1]{\ensuremath{#1\mbox{-}\mathsf{DP}}\xspace}
\newcommand{\eDP}{\DP{\epsilon}}
\newcommand{\edDP}{\DP{(\epsilon,\delta)}}
\newcommand{\difp}{differential privacy\xspace}
\newcommand{\km}{$k$-means\xspace}
\newcommand{\dpl}{DPLloyd\xspace}
\newcommand{\SMC}{\text{SMPC}}

\newcommand{\E}{\ensuremath{\mathsf{E}}\xspace}
\newcommand{\EE}[1]{\ensuremath{\mathsf{E}\left[#1\right]}\xspace}
\newcommand{\mse}[1]{\ensuremath{\mathsf{MSE}\left(#1\right)}\xspace}
\newcommand{\Var}[1]{\ensuremath{\mathsf{Var}\left(#1\right)}\xspace}
\newcommand{\Lap}[1]{\ensuremath{\mathsf{Lap}\left(#1\right)}\xspace}

\newcommand\NumPartys{\ensuremath{M}}
\newcommand\PartyIndex{\ensuremath{i}}
\newcommand\PartySet{\ensuremath{p}}
\newcommand{\Party}[1]{\ensuremath{\PartySet_{#1}}}

\newcommand\Server{S}
\newcommand\shares[1]{[[#1]]}  

\newcommand\DatasetSize{\ensuremath{N}}
\newcommand\DatasetIndex{\ensuremath{l}}
\newcommand\Dataset{\ensuremath{D}}
\newcommand\datapoint{\ensuremath{x}}
\newcommand\datapointi[1]{\ensuremath{x_{#1}}}

\newcommand{\VV}[1]{\ensuremath{v_{#1}}}
\newcommand{\round}[1]{\ensuremath{\lceil{#1}\rfloor}}
\newcommand{\DPNoise}{\ensuremath{\gamma}}

\newcommand{\fixedVV}[1]{\ensuremath{\tilde{\VV{#1}}}}
\newcommand\Prec{\ensuremath{q}}
\newcommand\SF{\ensuremath{2^{\Prec}}}

\newcommand\silscore{\ensuremath{\mathbf{Sil}}}

\newcommand\Ddim{\ensuremath{d}}
\newcommand\DimI{\ensuremath{h}}

\newcommand\bound{\ensuremath{B}}

\newcommand\NumClusters{\ensuremath{k}}
\newcommand\ClusterID{\ensuremath{j}}
\newcommand\Clusters{\ensuremath{\mathbf{O}}}
\newcommand{\ClusterI}[1]{\ensuremath{O_{#1}}}

\newcommand\SmS{\ensuremath{S}}
\newcommand\CntS{\ensuremath{C}}
\newcommand\DiffS{\ensuremath{R}}

\newcommand{\Sm}[1]{\ensuremath{\SmS_{#1}}}
\newcommand{\Cnt}[1]{\ensuremath{\CntS_{#1}}}
\newcommand{\Diff}[1]{\ensuremath{\DiffS_{#1}}}
\newcommand{\NoiseSm}[1]{\ensuremath{\tilde{\SmS}_{#1}}}
\newcommand{\NoiseCnt}[1]{\ensuremath{\tilde{\CntS}_{#1}}}
\newcommand{\NoiseDiff}[1]{\ensuremath{\tilde{\DiffS}_{#1}}}

\newcommand{\EX}[1]{\ensuremath{\mathbb{E}\left[#1\right]}}
\newcommand\Centers{\ensuremath{\bm{\mu}}}
\newcommand\NoiseCenters{\ensuremath{\tilde{\bm{\mu}}}}
\newcommand{\CenterI}[1]{\ensuremath{\mu_{#1}}}
\newcommand{\NoiseCenterI}[1]{\ensuremath{\tilde{\mu}_{#1}}}

\newcommand\NumIter{\ensuremath{T}}
\newcommand\IT{\ensuremath{t}}
\newcommand\MXD{\ensuremath{\eta}}

\newcommand\CF{\ensuremath{J}}

\newcommand\MS{\ensuremath{r}}

\newcommand{\MM}[1]{\ensuremath{\MS_{#1}}}
\newcommand\Ddiag{\ensuremath{\beta}}
\newcommand{\ourProtocol}{FastLloyd}
\newcommand{\suProtocol}{SuLloyd}
\newcommand{\gProtocol}{GLloyd}
\newcommand{\mohProtocol}{MohLloyd}
\newcommand{\lloydsProtocol}{Lloyd}


\newtheorem{theorem}{Theorem}[section]
\newtheorem{lemma}[theorem]{Lemma}
\newtheorem{corollary}[theorem]{Corollary}
\newtheorem{remark}{Remark}[section]
\newtheorem{example}{Example}[section]
\newtheorem{definition}{Definition}[section]


\newcommand{\generateMetricFigureGroup}[3][true]{%
    \begin{figure*}[t]
        \centering
        \ifthenelse{\equal{#1}{true}}{%
            \begin{subfigure}{\textwidth}
                \centering
                \includegraphics[width=0.5\textwidth]{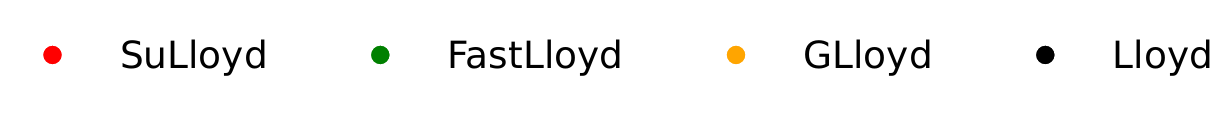}
            \end{subfigure}
        }{}%
        \begin{subfigure}{0.19\textwidth}
            \includegraphics[width=\textwidth]{figs/metrics/#2/adult.pdf}
            \caption{Adult}
            \label{fig:#2_adult}
        \end{subfigure}
        \begin{subfigure}{0.19\textwidth}
            \includegraphics[width=\textwidth]{figs/metrics/#2/birch2.pdf}
            \caption{Birch2}
            \label{fig:#2_birch2}
        \end{subfigure}
        \begin{subfigure}{0.19\textwidth}
            \includegraphics[width=\textwidth]{figs/metrics/#2/iris.pdf}
            \caption{Iris}
            \label{fig:#2_iris}
        \end{subfigure}
        \begin{subfigure}{0.19\textwidth}
            \includegraphics[width=\textwidth]{figs/metrics/#2/breast.pdf}
            \caption{Breast}
            \label{fig:#2_breast}
        \end{subfigure}
        \begin{subfigure}{0.19\textwidth}
            \includegraphics[width=\textwidth]{figs/metrics/#2/yeast.pdf}
            \caption{Yeast}
            \label{fig:#2_yeast}
        \end{subfigure}
        \begin{subfigure}{0.19\textwidth}
            \includegraphics[width=\textwidth]{figs/metrics/#2/house.pdf}
            \caption{House}
            \label{fig:#2_house}
        \end{subfigure}
        \begin{subfigure}{0.19\textwidth}
            \includegraphics[width=\textwidth]{figs/metrics/#2/lsun.pdf}
            \caption{LSun}
            \label{fig:#2_lsun}
        \end{subfigure}
        \begin{subfigure}{0.19\textwidth}
            \includegraphics[width=\textwidth]{figs/metrics/#2/s1.pdf}
            \caption{S1}
            \label{fig:#2_s1}
        \end{subfigure}
        \begin{subfigure}{0.19\textwidth}
            \includegraphics[width=\textwidth]{figs/metrics/#2/wine.pdf}
            \caption{Wine}
            \label{fig:#2_wine}
        \end{subfigure}
        \begin{subfigure}{0.19\textwidth}
            \includegraphics[width=\textwidth]{figs/metrics/#2/mnist.pdf}
            \caption{MNIST}
            \label{fig:#2_mnist}
        \end{subfigure}
        \caption{#3 vs $\epsilon$ for real datasets.}
        \label{fig:#2_util_compare}
    \end{figure*}
}


\date{}

\title{\ourProtocol: Federated, Accurate, Secure, and Tunable $k$-Means Clustering with Differential Privacy}

\author{
{\rm Abdulrahman Diaa}\\
University of Waterloo\\
abdulrahman.diaa@uwaterloo.ca
\and
{\rm Thomas Humphries}\\
University of Waterloo\\
thomas.humphries@uwaterloo.ca
\and
{\rm Florian Kerschbaum}\\
University of Waterloo\\
florian.kerschbaum@uwaterloo.ca}

\maketitle

\begin{abstract}
We study the problem of privacy-preserving $k$-means clustering in the horizontally federated setting. Existing federated approaches using secure computation suffer from substantial overheads and do not offer output privacy. At the same time, differentially private (DP) $k$-means algorithms either assume a trusted central curator or significantly degrade utility by adding noise in the local DP model. Naively combining the secure and central DP solutions results in a protocol with impractical overhead. Instead, our work provides enhancements to both the DP and secure computation components, resulting in a design that is faster, more private, and more accurate than previous work. By utilizing the computational DP model, we design a lightweight, secure aggregation-based approach that achieves five orders of magnitude speed-up over state-of-the-art related work. Furthermore, we not only maintain the utility of the state-of-the-art in the central model of DP, but we improve the utility further by designing a new DP clustering mechanism.
\end{abstract}

\section{Introduction}
Unsupervised learning allows data analysts to extract meaningful patterns from raw data that may be difficult to label.
The canonical example of unsupervised learning is the Euclidean \km clustering problem, where data is grouped into clusters with similar features.
Clustering has a plethora of important applications in recommendation systems, fraud detection, and healthcare analytics~\cite{ghosal2020short}.
In such applications, the dataset often contains sensitive data, which necessitates the use of privacy-preserving techniques.
Combining databases horizontally split across multiple parties in the federated scenario yields more robust insights about the global population at the cost of 
further exasperating the privacy risks. 

To counteract the privacy risks, a number of works focus on solving the federated \km problem using secure multi-party computation~\cite{Vaidya2003, Jha2005, Jagannathan2005, Jagannathan2010, Gheid2016, Bunn2007, Jschke2018, Mohassel2020, Patel2012, Yuan2019, Rao2015, Liu2015, Silva2017, Jiang2020}.
Secure computation enables analysts to solve the \km problem while keeping the sensitive data encrypted.
However, all of these works still output the exact result of the clustering.
This gives a false sense of privacy; although the input and intermediate computations all remain private, no effort is made to protect the privacy of the output.
It is well known in the literature that publishing exact statistics can leak significant information about the input data and allow attacks such as dataset reconstruction~\cite{Dwork2017}.
This level of privacy does not justify the significant runtime overhead incurred by solely using secure computation.

A separate line of research considers using differential privacy (DP) to protect the
data.
Differential privacy guarantees that the output (the cluster centroids) will be approximately the same regardless of any individual user's participation. 
Differential privacy can be applied in the central, local, or shuffle models.
In the central model, a trusted aggregator receives the input as plaintext (unprotected) and randomizes the cluster centroids to ensure the output satisfies DP. 
Several works~\cite{Blum2005, Dwork2011, Su2016, Mohan2012, Zhangj2013, Balcan2017, Lu2020, Park2017, NI2021} have considered the central setting.
In the local model, instead of using a trusted aggregator, each party perturbs their data locally before sending it to the aggregator.
While this removes the need for a trusted aggregator (similar to secure computation), the utility at high privacy levels is low.
Specifically, the utility is asymptotically worse than the central model by a factor of $\sqrt{n}$, where $n$ is the number of data points~\cite{chanOptimalLowerBound2012}.
Several works~\cite{Nissim2018, Stemmer2021, Chang2021} have considered the local model.
The shuffle model is a hybrid between the local and central models, with two mutually distrustful parties: a shuffler and an aggregator.
Despite improving over the local model, the shuffle model often has a worse utility than the central model and requires additional parties or computational overhead to implement an oblivious shuffle.

To summarize, no prior work offers a solution that protects both the input and output privacy of the federated \km problem and provides a good utility vs.~privacy trade-off.
A straw man solution to our problem would be to combine the current state-of-the-art solutions in federated \km using secure computation and \km in the central DP model.
However, this solution would be unacceptable as the runtime of current federated \km approaches using secure computation is prohibitively large.
The state-of-the-art approach of Mohassel et al.~\cite{Mohassel2020} incurs runtime overheads on the order of tens of minutes.
Naively adding differentially private noise to this computation will further increase this runtime.
%

Our work focuses on designing an efficient DP \km in the federated setting (using secure computation) that protects the privacy of the input, output, and intermediate computations.
Following related work in federated \km~\cite{Mohassel2020, Jiang2020}, we use Lloyd's algorithm as the foundation of our protocol.
However, we design a new variant of DP-Lloyd~\cite{Blum2005, Su2016} that achieves a tighter bound on the sensitivity.
Our algorithm significantly improves the clustering utility over the state-of-the-art DP algorithm of Su et al.~\cite{Su2016}, especially in higher dimensions and number of clusters.

Our algorithm is of independent interest in the central model of DP.
First, we enforce a bound on the radius of each cluster in the assignment step of Lloyd's algorithm.
We then modify Lloyd's algorithm to compute updates relative to the previous centroid.
Our relative updates have a sensitivity proportional to the bounded cluster radius rather than the domain size used in previous work~\cite{Blum2005, Su2016}.
We find that our radius-based sensitivity bound is best suited to the use of the analytic Gaussian mechanism~\cite{balle18_gaussian,dong22_gdp}, which further improves utility.
Finally, we utilize additional post-processing steps based on the radius and the domain to improve the algorithm further.
Together, these components lead to an improvement of up to 88\% in utility (reduction in clustering error). 

To tailor our DP-Lloyd algorithm to secure computation, we design a protocol that publishes intermediate computations in each iteration.
This allows operations that would be expensive in secure computation (such as assignment and division) to be conducted in plain text.
However, unlike prior work~\cite{Gheid2016, Jiang2020}, we ensure that the intermediate computations are protected by DP guarantees.
We prove that the protocol is secure in the computational DP definition~\cite{mironov_CDP_09}, which allows DP-bounded leakage during the computation.
Typically, utilizing the computational DP model to leak intermediate steps implies sacrificing privacy or utility for an efficient runtime. 
However, in our case, we specifically choose to leak intermediate values already accounted for in the central DP proof of DP-Lloyd~\cite{Blum2005, Su2016}.
In other words, we get this speed-up with no additional cost to the privacy parameter $\epsilon$.
Furthermore, rather than decrease the utility, we improve the utility of the clustering over state-of-the-art approaches~\cite{Su2016}.

To perform the aggregation over multiple parties, we design a lightweight secure aggregation protocol that keeps the aggregator oblivious to the client's inputs and the global centroids that are output at each iteration of the protocol.
This allows the aggregator to add DP noise equivalent to the central model of DP.
The combination of our improved DP algorithm, leaking the DP centroids, and our lightweight, secure aggregation protocol allows us to reduce the computation time by up to five orders of magnitude compared to related work~\cite{Mohassel2020} (from minutes to milliseconds per iteration).

In summary, our contributions are four-fold:

\begin{itemize}
    \item We design the first DP protocol for horizontally federated private \km.
    \item We improve the clustering utility over state-of-the-art DP \km solutions by developing a new DP algorithm with various improvements, such as enforcing a radius constraint on the centroids and using relative cluster updates.
    \item We design an efficient protocol using DP and a lightweight secure aggregation protocol to implement our protocol in the local trust model. 
    \item We prove our protocol is secure and preserves the end-to-end privacy in the computational model of DP and reduces the runtime by five orders of magnitude over the state-of-the-art secure federated approaches.
\end{itemize}

The remainder of the paper is organized as follows.
We begin with some problem-specific background information in Section~\ref{sec:background}
 and formally define the problem in Section~\ref{sec:prob_state}. In Section~\ref{sec:rel_work}, we summarize the relevant literature.
We then present our complete protocol for federated DP \km (\ourProtocol) and prove its privacy and utility in Section~\ref{sec:main_algo}. 
Finally, in Section~\ref{sec:eval}, we give an in-depth evaluation of our protocol in terms of utility, runtime, and communication size over various real-world and synthetic datasets.

\section{Background}\label{sec:background}
\subsection{Notation}
Throughout this paper, we will denote objects that we intend to further slice/index with boldface notation (e.g. $\Centers$), while keeping atomic objects (whether scalars or vectors) as normal face (e.g. $\epsilon$).
A summary of the notation used in this paper is provided in Table~\ref{tab:notation}, and we will define the notation as it is used. 
\begin{table}
    \centering
    \begin{tabular}{|c|c|}
    \hline
    \textbf{Symbol} &\textbf{Description} \\ \hline
    $\boldsymbol{\PartySet}$ &  Set of $\NumPartys$ clients: $\{\Party{1}, \Party{2}, ..., \Party{\NumPartys}\}$ \\ \hline  
    \Server& Service provider \\ \hline
    $\Dataset_{\PartyIndex}$ & Local dataset of client $\Party{\PartyIndex}$: $\{\datapointi{1}, \datapointi{2}, ..., \datapointi{\DatasetSize_{\PartyIndex}}\}$ \\ \hline
    $\Ddim$ & Dimension of the data points \\ \hline
    $\NumClusters$ & Number of clusters \\ \hline
    $\Clusters$ & Set of \NumClusters{}~clusters: $\{\ClusterI{1}, \ClusterI{2}, ..., \ClusterI{\NumClusters}\}$ \\ \hline
    $\Centers$ & Centroids of clusters: $\{\CenterI{1}, \CenterI{2}, ..., \CenterI{\NumClusters}\}$ \\ \hline
    $\Sm{\ClusterID}$ & Sum of data points in cluster $\ClusterID$ \\ \hline
    $\Cnt{\ClusterID}$ & Count of data points in cluster $\ClusterID$ \\ \hline
    $\Diff{\ClusterID}$ & Relative sum of data points in cluster $\ClusterID$ \\ \hline
    $\NumIter$ & Number of iterations in Lloyd's algorithm \\ \hline
    $\bm{\DPNoise}$& Differentially-private noise \\ \hline
    $\MXD$ & Maximum radius bound \\ \hline
    $(\epsilon, \delta)$& Privacy budget \\ \hline
    $\sigma$& Noise multiplier \\ \hline

    \end{tabular}
    \caption{Notation used in the paper.}\label{tab:notation}
\end{table}

\subsection{$k$-Means Problem}\label{sec:lloyd}
The \km problem is a discrete optimization problem that aims to partition a set of $\DatasetSize$ $\Ddim$-dimensional observations into $\NumClusters$ clusters, each represented by its mean or centroid.
Given a dataset of observations $\Dataset = \{\datapointi{1}, \datapointi{2}, ..., \datapointi{\DatasetSize}\}$, where each observation is a $\Ddim$-dimensional real vector, the \km problem is to find an assignment of data points to clusters, and a set of cluster centroids, that minimizes the Within-cluster Sum of Squares (WCSS) objective:

\begin{equation}\label{eqn:wcss_obj}
  \underset{\Clusters, \Centers}{\text{argmin}} \sum_{\ClusterID=1}^{\NumClusters} \sum_{\datapointi{\DatasetIndex} \in \ClusterI{\ClusterID}} ||\datapointi{\DatasetIndex} - \CenterI{\ClusterID}||^2_2 
\end{equation}

where $\Clusters = \{\ClusterI{1}, \ClusterI{2}, ..., \ClusterI{\NumClusters}\}$, $\ClusterI{\ClusterID} \subseteq \Dataset$ are the clusters, and  $\Centers = \{\CenterI{1}, \CenterI{2}, ..., \CenterI{\NumClusters}\}$ are the centroids of the clusters, defined to be the (arithmetic) mean of points in $\ClusterI{\ClusterID}$.

\subsubsection{Lloyd's Algorithm}
The \km problem is NP-hard in Euclidean space~\cite{Aloise2009}. However, a simple heuristic known as Lloyd's algorithm is commonly applied for practical applications~\cite{Lloyd1982}.
We detail the basic procedure of Lloyd's algorithm:

\begin{enumerate}
  \item Initialization step: Randomly sample $\NumClusters$ centroids $\{\CenterI{1}, \CenterI{2}, ..., \CenterI{\NumClusters}\}$ (typically from the datapoints).
  \item Repeat until convergence (the assignments no longer change) or a predetermined number of iterations has been reached:
        \begin{enumerate}
          \item Assignment step: Assign each observation $\datapointi{\DatasetIndex}$ to the nearest centroid (using the Euclidean distance). This creates clusters $\ClusterI{\ClusterID}$ for $\ClusterID=1,2,...,\NumClusters$.
          Formally, the assignment is:
                \begin{equation*}
                    \displayindent0pt
                    \displaywidth\columnwidth
                  \ClusterI{\ClusterID}^{(\IT)} = \big \{ \datapointi{\DatasetIndex} :
                  ||\datapointi{\DatasetIndex} - \CenterI{\ClusterID}^{(\IT-1)}||^2_2 \leq ||\datapointi{\DatasetIndex}- \CenterI{c}^{(\IT-1)}||^2_2 \ \forall c\in\{1,\dots, \NumClusters \}\}
                \end{equation*}
          \item Update step: Calculate the new centroids to be the mean of the observations in the cluster:
                \begin{equation*}
                    \displayindent0pt
                    \displaywidth\columnwidth
                  \CenterI{\ClusterID}^{(\IT)} = \frac{\sum\limits_{\datapointi{\DatasetIndex} \in \ClusterI{\ClusterID}^{(\IT)}} \datapointi{\DatasetIndex}}{\left|\ClusterI{\ClusterID}^{(\IT)}\right|} 
                \end{equation*}
        \end{enumerate}
\end{enumerate}

\subsection{Differential Privacy}
Differential privacy (DP)~\cite{dwork2006} is an increasingly popular notion to protect the privacy of individuals while allowing the computation of aggregate statistics.
Differential privacy guarantees that an algorithm's output is approximately the same, regardless of the participation of any single user.
More formally, differential privacy can be defined as follows.
\begin{definition}[Differential Privacy]\label{def.adp}
    A randomized algorithm $M:\mathcal{D} \mapsto \mathbb{R}$ is $(\epsilon,\delta)$-DP, if for any pair of neighbouring datasets $\Dataset,\Dataset' \in \mathcal{D}$, and for any $S \subseteq \mathbb{R}$ we have
    \begin{equation}
     \Pr[M(\Dataset)\in S] \leq e^{\epsilon} \Pr[M(\Dataset')\in S] +\delta .
    \end{equation}
\end{definition}
The privacy parameter $\epsilon$ defines how similar the outputs must be, and $\delta$ allows a small chance of failure in the definition.
We use the unbounded neighbouring definition where datasets are neighbours if $|\Dataset\backslash \Dataset'\cup \Dataset'\backslash \Dataset| = 1$. That is, we allow for the addition or removal of a single data point. 
We note that arbitrary computations can be carried out on the output of a DP mechanism without affecting privacy (the post-processing lemma~\cite{dwork_roth_DP_Txtbook}).
Finally, DP is composed naturally with multiple runs of a mechanism.
If we apply a differentially private mechanism(s) sequentially, the privacy parameters are composed through summation or more advanced methods~\cite{dwork_roth_DP_Txtbook}.
If a mechanism is applied multiple times over disjoint subsets of the dataset, then the total privacy leakage is the maximum privacy parameter over each subset (parallel composition~\cite{dwork_roth_DP_Txtbook}).

\begin{definition}[Sensitivity]
    Let $f:\mathcal{D}\mapsto \mathbb{R}^k$. If $\ \mathbb{D}$ is a distance metric between elements of $\ \mathbb{R}^k$ then the $\mathbb{D}$-sensitivity of $f$ is
    \begin{equation}
        \Delta^{(f)}=\max_{(D,D')} \mathbb{D}(f(D), f(D')),
    \end{equation}
    where $(D,D')$ are pairs of neighbouring datasets.
\end{definition}
We will focus on the $\ell_2$ norm in this work as we use the Gaussian Mechanism. To analyze the Gaussian Mechanism, we will use Gaussian Differential Privacy (GDP)~\cite{dong22_gdp}.

\begin{definition}[GDP~\cite{dong22_gdp}] \label{def:GDP}
    A mechanism $M$ is said to satisfy $\theta$-Gaussian Differential Privacy ($\theta$-GDP) if it is $G_\theta$-DP. That is,
    \[
    \mathcal{T}\big(M(D),M(D')\big) \ge G_\theta
    \]
    for all neighbouring datasets $D$ and $D'$, where $\mathcal{T}$ is a trade-off function measuring the difficulty for attackers in identifying presence of an individual data point and $G_\theta=\mathcal{T}\big(\mathcal{N}(0,1),\mathcal{N}(\theta,1)\big)$ (see Dong et al.~\cite{dong22_gdp} for specifics of the definition).
\end{definition}
 Naturally, the Gaussian Mechanism satisfies GDP.
\begin{theorem}(Gaussian Mechanism GDP~\cite{dong22_gdp})\label{thm:g_mech}
    Define the Gaussian mechanism that operates on a statistic $f$ as $M(D) = f(D) + \DPNoise$, where $\DPNoise \sim \mathcal{N}(0, (\Delta^{(f)})^2/\theta^2)$. Then, $M$ is $\theta$-GDP.
\end{theorem}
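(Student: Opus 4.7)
The plan is to unfold the definitions and reduce the statement to the one-dimensional case. Fix a pair of neighbouring datasets $D, D'$, and write the two output distributions as $P = \mathcal{N}(f(D), \sigma^2 I)$ and $Q = \mathcal{N}(f(D'), \sigma^2 I)$, where $\sigma = \Delta^{(f)}/\theta$ and $I$ is the identity of appropriate dimension. By Definition~\ref{def:GDP}, it suffices to show that the trade-off function $T(P, Q)$ dominates $G_\theta = T\bigl(\mathcal{N}(0,1), \mathcal{N}(\theta,1)\bigr)$ pointwise.

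The first step is the reduction to one dimension. By the translation- and rotation-invariance of the isotropic Gaussian, I would shift so that $f(D) = 0$ and rotate so that $f(D')$ lies on the first coordinate axis, obtaining $P = \mathcal{N}(0, \sigma^2 I)$ and $Q = \mathcal{N}(v\, e_1, \sigma^2 I)$ where $v = \|f(D) - f(D')\|_2 \le \Delta^{(f)}$. Since the remaining coordinates are identically distributed under $P$ and $Q$, projecting onto the first coordinate loses no distinguishing power: formally, the projection is a sufficient statistic, and combining the data-processing inequality for trade-off functions (Lemma~2.9 of Dong et al.~\cite{dong22_gdp}) with the identity on the sufficient statistic yields $T(P, Q) = T\bigl(\mathcal{N}(0,\sigma^2), \mathcal{N}(v,\sigma^2)\bigr)$.

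The next step is a simple rescaling: dividing both output distributions by $\sigma$ is an invertible deterministic map, so applying the data-processing inequality in both directions gives $T\bigl(\mathcal{N}(0,\sigma^2), \mathcal{N}(v,\sigma^2)\bigr) = T\bigl(\mathcal{N}(0,1), \mathcal{N}(v/\sigma,1)\bigr) = G_{v/\sigma}$. Finally, $G_\mu$ is pointwise non-increasing in $\mu$, since greater mean separation yields an easier hypothesis-testing problem and hence a smaller trade-off curve. Because $v/\sigma \le \Delta^{(f)}/\sigma = \theta$, one concludes $T(P,Q) = G_{v/\sigma} \ge G_\theta$, which is exactly $\theta$-GDP.

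The step I expect to require the most care is the dimensional reduction: showing that the $k$-dimensional trade-off collapses to the one-dimensional trade-off along the direction of the mean shift. This relies on the projection onto $e_1$ being a sufficient statistic for the binary testing problem and on the two-sided use of the data-processing inequality for trade-off functions, both of which are developed in Dong et al.~\cite{dong22_gdp} and should be cited rather than reproved. The remaining ingredients (rotational symmetry, rescaling, and monotonicity of $G_\mu$ in $\mu$) are routine.
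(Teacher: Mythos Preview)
Your argument is correct and is essentially the standard proof of this fact. However, there is nothing to compare against: the paper does not prove Theorem~\ref{thm:g_mech}. It is stated in the background section as a known result, with the citation to Dong et al.~\cite{dong22_gdp} serving in lieu of a proof, and it is then \emph{used} (not reproved) in the proof of Lemma~\ref{lem:GDP_analysis} and Theorem~\ref{thm:sec_proof}. So your proposal is not an alternative to the paper's proof; it is simply a self-contained derivation of a result the paper imports from the literature.
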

Similar to DP, GDP composes over multiple adaptive uses of a mechanism.
\begin{theorem}[GDP Composition~\cite{dong22_gdp}]\label{thm:gdp_comp}
    The $n$-fold composition of $\theta_i$-GDP mechanisms is $\sqrt{\theta_1^2+\cdots+\theta_n^2}$-GDP.
\end{theorem}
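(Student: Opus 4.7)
The plan is to work inside the $f$-DP framework of Dong et al., where every mechanism is associated with a trade-off function and composition corresponds to a tensor product $\otimes$ on these functions. The statement then reduces to two separate facts: (i) an abstract composition theorem saying that the $n$-fold (adaptive) composition of $f_i$-DP mechanisms is $f_1 \otimes \cdots \otimes f_n$-DP, and (ii) a concrete identity $G_{\theta_1} \otimes G_{\theta_2} = G_{\sqrt{\theta_1^2 + \theta_2^2}}$. Granting (i), a simple induction on $n$ together with (ii) yields the claimed parameter $\sqrt{\theta_1^2 + \cdots + \theta_n^2}$.

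For the abstract composition step, I would first treat the case $n = 2$ and then iterate. Given neighbouring $D, D'$, let $P_i, Q_i$ denote the output distributions of $M_i$ on $D, D'$ respectively (with $M_2$ adaptive, so $P_2, Q_2$ are conditional distributions given the first output). The joint laws are $P_1 \otimes P_2$ and $Q_1 \otimes Q_2$. The key lemma is that the trade-off function between product distributions is the tensor product of the per-coordinate trade-off functions, i.e.\ $T(P_1 \otimes P_2, Q_1 \otimes Q_2) \ge T(P_1, Q_1) \otimes T(P_2, Q_2)$, which is proven by constructing an optimal rejection rule on the product space from optimal rules on each factor and invoking the Neyman--Pearson lemma. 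The adaptive case requires a pointwise-in-first-output argument, but since $f_2$-DP holds uniformly over all choices of the conditioning value, the bound passes through.

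For the concrete identity, I would use that the trade-off function for testing $\mathcal{N}(0, I_2)$ versus $\mathcal{N}((\theta_1, \theta_2), I_2)$ depends only on the $\ell_2$ norm of the mean shift. Specifically, by rotational invariance of the standard Gaussian, projecting onto the unit vector $(\theta_1, \theta_2)/\sqrt{\theta_1^2 + \theta_2^2}$ gives a sufficient statistic, reducing the bivariate test to a univariate test of $\mathcal{N}(0,1)$ versus $\mathcal{N}(\sqrt{\theta_1^2 + \theta_2^2}, 1)$. This exactly matches the definition of $G_{\sqrt{\theta_1^2 + \theta_2^2}}$ in Definition~\ref{def:GDP}.

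The main obstacle is handling adaptivity cleanly: one has to make sure that the definition of $\otimes$ on trade-off functions is compatible with a second mechanism whose distribution depends on the first output. I would resolve this by fixing notation so that $f_2$-DP is required to hold for every realisation of $M_1$'s output, which is the standard adaptive $f$-DP setting, and then the product-distribution argument above applies verbatim. Once $n = 2$ is settled, extending to arbitrary $n$ is a routine induction in which the composed trade-off function $G_{\sqrt{\theta_1^2 + \cdots + \theta_{n-1}^2}}$ is combined with $G_{\theta_n}$ using the same identity.
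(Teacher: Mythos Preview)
The paper does not prove this theorem at all: it is quoted verbatim as a background result from Dong et al.~\cite{dong22_gdp} (their Corollary~2), with no accompanying argument. Your sketch is therefore not being compared against anything in the paper; you are reconstructing the proof from the cited reference. That reconstruction is accurate: Dong et al.\ indeed establish an abstract $f$-DP composition theorem via the tensor product of trade-off functions and then compute $G_{\theta_1}\otimes G_{\theta_2}=G_{\sqrt{\theta_1^2+\theta_2^2}}$ using rotational invariance of the Gaussian, exactly as you outline. Nothing more is needed here since the paper treats the statement as a citation, not a claim to be proved.
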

Finally, it is possible to convert a GDP guarantee to DP and vice versa:
\begin{theorem}[GDP to DP~\cite{dong22_gdp, balle18_gaussian}]\label{thm:GDPtoDP}
    A mechanism is $\theta$-GDP if and only if it is $\big(\epsilon,\delta(\epsilon)\big)$-DP for all $\epsilon \geq 0$, where
    \[
    \delta(\epsilon)= \Phi\Big( -\frac{\varepsilon}{\theta} +\frac{\theta}{2} \Big)-
    e^{\epsilon}\Phi\Big(- \frac{\varepsilon}{\theta} - \frac{\theta}{2} \Big).
    \]
\end{theorem}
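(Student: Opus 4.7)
My plan is to use the trade-off function language already invoked in Definition~\ref{def:GDP}. The key external fact is the characterization of $(\epsilon,\delta)$-DP via trade-off functions: $M$ is $(\epsilon,\delta)$-DP iff $T(M(D),M(D'))(\alpha) \geq f_{\epsilon,\delta}(\alpha) := \max\{0,\,1-\delta-e^\epsilon \alpha,\,e^{-\epsilon}(1-\delta-\alpha)\}$ for all $\alpha \in [0,1]$ and all neighbouring $D,D'$. Combined with Definition~\ref{def:GDP}, the theorem reduces to showing that $G_\theta(\alpha) \ge f_{\epsilon,\delta(\epsilon)}(\alpha)$ pointwise for every $\epsilon\ge 0$, with equality recovered by taking the upper envelope over $\epsilon$. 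As a preliminary I would make $G_\theta$ explicit via Neyman--Pearson applied to testing $\mathcal{N}(0,1)$ against $\mathcal{N}(\theta,1)$, which gives $G_\theta(\alpha) = \Phi\bigl(\Phi^{-1}(1-\alpha)-\theta\bigr)$.

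For the forward direction ($\theta$-GDP $\Rightarrow$ $(\epsilon,\delta(\epsilon))$-DP), I would find the tightest admissible $\delta(\epsilon)$ for each fixed $\epsilon$ by constructing the unique line of slope $-e^\epsilon$ tangent to the convex curve $G_\theta$, since $f_{\epsilon,\delta}$ is piecewise-linear with slopes $-e^\epsilon$ and $-e^{-\epsilon}$. Differentiating through the inverse gives $G_\theta'(\alpha) = -\phi\bigl(\Phi^{-1}(1-\alpha)-\theta\bigr)/\phi\bigl(\Phi^{-1}(1-\alpha)\bigr)$, and setting this equal to $-e^\epsilon$ while using $\phi(u-\theta)/\phi(u) = \exp(\theta u - \theta^2/2)$ yields $u = \epsilon/\theta + \theta/2$, hence the tangent point $\alpha^\star = \Phi(-\epsilon/\theta - \theta/2)$ with $G_\theta(\alpha^\star) = \Phi(\epsilon/\theta - \theta/2)$. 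Imposing the tangency condition $1-\delta(\epsilon) = G_\theta(\alpha^\star) + e^\epsilon \alpha^\star$ and rewriting $1-\Phi(\epsilon/\theta - \theta/2) = \Phi(-\epsilon/\theta + \theta/2)$ produces exactly the stated formula; the symmetric line of slope $-e^{-\epsilon}$ is handled by the symmetry of the Gaussian trade-off $G_\theta = G_\theta^{-1}$, so both pieces of $f_{\epsilon,\delta(\epsilon)}$ yield the same $\delta(\epsilon)$.

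For the converse, I would show that $G_\theta(\alpha) = \sup_{\epsilon\ge 0} f_{\epsilon,\delta(\epsilon)}(\alpha)$ pointwise on $[0,1]$: the previous step exhibits each $f_{\epsilon,\delta(\epsilon)}$ as a supporting line of $G_\theta$ at the point $\alpha^\star(\epsilon)$, and as $\epsilon$ ranges over $[0,\infty)$ these tangent points sweep out all of $(0,1)$, so a smooth strictly convex function being the pointwise supremum of its supporting lines yields the claim. Consequently, $T(M(D),M(D')) \ge f_{\epsilon,\delta(\epsilon)}$ for all $\epsilon \ge 0$ implies $T(M(D),M(D')) \ge G_\theta$, i.e., $\theta$-GDP. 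The main obstacle will be the calculus step: differentiating $G_\theta$ via the inverse CDF, cleanly extracting the tangent point from the Gaussian likelihood-ratio identity, and matching the intercept to the stated closed form. Once that algebra is in hand, the rest is bookkeeping with standard facts about trade-off functions and convex envelopes.
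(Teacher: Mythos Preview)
Your proposal is a correct and complete sketch of the standard argument from Dong et al.~\cite{dong22_gdp}. Note, however, that the paper does not actually prove Theorem~\ref{thm:GDPtoDP}: it is stated as a cited result from~\cite{dong22_gdp, balle18_gaussian} and used as a black box (the only follow-up is the remark that Balle and Wang's Algorithm~1 is used to numerically invert the relation for $\theta$). So there is no ``paper's own proof'' to compare against; your approach is precisely the one in the primary reference, and the tangent-line computation you outline (solving $\phi(u-\theta)/\phi(u)=e^\epsilon$ to get $u=\epsilon/\theta+\theta/2$, then reading off $\delta(\epsilon)$ from the intercept) is the canonical derivation.
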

In practice, we use the algorithm derived by Balle and Wang to solve this function for $\theta$~\cite[Algorithm 1]{balle18_gaussian}.

\section{Problem Statement}\label{sec:prob_state}
We consider \NumPartys~clients denoted by $\boldsymbol{\PartySet} = \{\Party{1}, \Party{2}, ..., \Party{\NumPartys}\}$, with each party \Party{\PartyIndex} owning a private dataset $\Dataset_{\PartyIndex} = \{\datapointi{1}, \datapointi{2}, ..., \datapointi{\DatasetSize_{\PartyIndex}}\}$, where $\datapointi{\DatasetIndex} \in \mathbb{R}^d$, $\Ddim$ denotes the dimensionality of the dataset and $\DatasetSize_{\PartyIndex}$ is the size of the dataset of party \PartyIndex{}. 
The total dataset that we compute on, can be very large, even if there are only a small number of clients.
The objective is to compute a collaborative \km clustering $(\Clusters, \Centers)$ over $\boldsymbol{\Dataset}$ that minimizes the WCSS objective (\ref{eqn:wcss_obj}); where $\boldsymbol{\Dataset}= \bigcup_{\PartyIndex=1}^{\PartyIndex=\NumPartys}\Dataset_{\PartyIndex}$ is the union of the datasets held by the clients, $\Clusters = \{\ClusterI{1}, \ClusterI{2}, ..., \ClusterI{\NumClusters}\}$ are the clusters, and $\Centers = \{\CenterI{1}, \CenterI{2}, ..., \CenterI{\NumClusters}\}$ are the respective centroids of the clusters.

We aim to optimize this objective while formally proving the privacy of the output and intermediate computations. Specifically, the protocol should be secure in the computational model of differential privacy introduced by Mirnov et al.~\cite{mironov_CDP_09} and extended to the multi-party setting by Humphries et al.~\cite{humphries2022}. We further extend this model to the approximate DP setting by incorporating the failure probability $\delta$.
\begin{definition}[IND-CDP-MPC~\cite{humphries2022, mironov_CDP_09}]\label{def:sec_model}
A multi-party protocol $\Pi$ for computing function $f$ satisfies $(\epsilon(\lambda),\delta)$- indistinguishable computationally differential privacy (IND-CDP-MPC) if for every probabilistic polynomial time (in $\lambda$) adversary $A$ with input dataset $D_A$, and for neighbouring datasets $D,D'$ belonging to the honest parties (i.e. $D,D' = \cup_{i \setminus A} D_i$),
\begin{align*}
&\Pr[\mathcal{A}(\textsc{view}_A^{\Pi}(D_A, D))=1] \\
\leq &\exp(\epsilon) \cdot \Pr[\mathcal{A}(\textsc{view}_A^{\Pi}(D_A, D'))=1] + negl(\lambda) + \delta.
\end{align*}
where $\mathcal{A}$ is the function representing the adversary's decision on whether the dataset was $D$ or $D'$ based on their $\textsc{view}_A^\Pi$, which is the transcript of all messages observed by adversary $A$ during execution of protocol $\Pi$.
$negl(\lambda)$ is a negligible function decreasing faster than any inverse polynomial in $\lambda$. Likewise, the definition holds for every other party's view of neighbours $(D, D')$.
\end{definition}

Intuitively, a protocol that satisfies IND-CDP-MPC \emph{securely} simulates a TTP executing a central DP mechanism.
Specifically, even after observing the output and intermediate computations, any (computationally bounded by a polynomial in $\lambda$) party should not learn more about any other party's local dataset than what could be learned from the differentially private leakage of the central DP mechanism itself.

We make the following assumptions when designing our protocol:
\begin{itemize}
    \item The existence of an honest-but-curious service provider \Server~to assist with multiparty computations. This party should be oblivious to all inputs and results, i.e., the service provider learns nothing about the client's input or output during the execution of the protocol (a weaker assumption than the local and shuffle models). We also discuss alternatives to this assumption in Appendix~\ref{sec:extensions}.
    \item Clients share a common secret used as a Pseudorandom Number Generator (PRNG) seed and do not collude with the server. This secret can be established through standard public-key schemes (e.g., Bonawitz et al.~\cite{Bonawitz2017} uses Diffie-Hellman key exchange~\cite{diffie1976new}) or TLS-secured channels. 
\end{itemize}

\subsection{Comparison to other models}\label{sec:model_comparison}
In the \textbf{local DP model}, each client independently perturbs their data before sharing it. 
Since privacy is guaranteed at the source, there is no need to trust an aggregator or rely on any cryptographic protocols.
However, it significantly reduces accuracy because noise is added individually to each data point rather than to aggregated statistics.

The \textbf{central DP model}, in contrast, involves a trusted third-party (TTP) aggregator that collects raw data from clients and applies noise only to aggregated results. 
This significantly improves accuracy over the local model since the sensitivity of aggregated statistics is lower than individual points. 
However, this model requires strong trust assumptions, as the aggregator has full access to the clients' raw data.

The \textbf{shuffle DP model} offers a compromise between local and central DP by introducing a non-colluding semi-trusted shuffler that permutes messages before aggregation.
This shuffling enhances privacy, allowing clients to inject less noise compared to local DP while still not requiring trust in the aggregator.
Although accuracy improves compared to local DP, the shuffle model remains inherently less accurate than the central DP model.
Additionally, it necessitates either trusting the shuffler to remain non-colluding or employing an oblivious shuffle protocol.

On the other hand, the \textbf{IND-CDP-MPC model} ``replaces'' the trusted aggregator from central DP with a secure multi-party computation (MPC) protocol that simulates the central mechanism.
Therefore, it retains the accuracy of the central DP model while operating under a local model of trust (only revealing noised data).
This allows for a significantly better privacy vs. utility trade-off than the local and shuffle DP models, and is therefore the focus of this work.

The \textbf{exact MPC model} allows multiple parties to jointly compute a function over their inputs while keeping those inputs private.
The model assumes that no information beyond the final  (non-private) output is revealed.
However, without protecting the output with DP, the result is vulnerable to reconstruction attacks that can leak the entire dataset in the worst case.
Thus, we focus on the IND-CDP-MPC model.

\section{Related Work}\label{sec:rel_work}
\subsection{Secure Exact \km}

We first discuss secure computation based approaches to \km that do not preserve the privacy of the output.
This literature shows a trade-off between privacy and efficiency.
Approaches that prioritize efficiency often compromise privacy by allowing the leakage of intermediate computations such as sums, counts, or centroids, which can be exploited to infer sensitive data~\cite{Gheid2016, Jiang2020}. 
For instance, Gheid et al.~\cite{Gheid2016} rely on computing secure sums for aggregating sums and counts across clients.
While this is very efficient in practice, it reveals aggregate intermediate values to all clients, which is not considered secure~\cite{Hegde2021}.
As we discuss in Appendix~\ref{sec:extensions}, naively fixing this leakage would require a significant degradation in utility or performance.
Another approach by Jiang et al.~\cite{Jiang2020} employs homomorphic encryption and garbled circuits to conceal sums and counts.
However, it still reveals the intermediate centroids to all clients in the clear.
Many other works in this category are explained in the SoK of Hedge et al.~\cite{Hegde2021}.

Conversely, secure approaches that do not leak intermediate computations employ heavy cryptographic techniques, like fully homomorphic encryption, leading to significant performance degradation~\cite{Bunn2007, Rao2015, Jschke2018, Kim2018}.
Bunn and Ostrovsky~\cite{Bunn2007} utilize arithmetic secret sharing alongside homomorphic encryption, with high-performance overheads. 
Homomorphic encryption has also been used by Rao et al.~\cite{Rao2015}, J{\"a}schke et al.~\cite{Jschke2018}, and Kim et al.~\cite{Kim2018}, and consistently results in prohibitively high computation time due to the computation demands of the encryption. 
The work of Mohassel et al.~\cite{Mohassel2020} is a notable exception, offering a scheme that is significantly faster than previous state-of-the-art methods.
By using optimized $2$-party computation primitives,  batched oblivious transfers, and garbled circuits, they achieve a speedup of five orders of magnitude over J{\"a}schke et al.~\cite{Jschke2018}. 
However, the runtime is still on the order of minutes; and the communication size is on the order of gigabytes.
In contrast, our approach is as efficient as the protocols that leak intermediate computations, while protecting the privacy of the output and intermediate computations with DP.

\subsection{Differentially Private \km}

DP \km algorithms aim to address output privacy by introducing noise during clustering, hiding individual data contributions, and ensuring a formal privacy guarantee.
Research in DP \km is divided into central and local models.

\subsubsection{Central DP} The central model operates with a trusted curator who collects data for analysis, hiding individuals' information in the output, but not from the curator itself. 
Among the central approaches, DP-Lloyd introduces noise during each centroid update, by adding Laplace noise to both the sum (numerator) and the counts (denominator) when computing the arithmetic mean of points within a cluster~\cite{Blum2005}.
DP-Lloyd was later implemented in the PinQ framework~\cite{Mcsherry2009} with a fixed number of iterations.
Dwork~\cite{Dwork2011} extended this framework to allow for an arbitrary number of iterations, allocating an exponentially decreasing privacy budget to each iteration, although it was noted that utility degrades beyond a certain number of iterations due to increasing noise levels.
The state-of-the-art in DP-Lloyd was achieved by Su et al.~\cite{Su2016}, who included an error analysis to determine the optimal number of iterations, optimized the splitting of privacy budget between sum and count queries, and introduced a ``sphere-packing" centroid initialization method that significantly improves clustering quality.

The Sample and Aggregate Framework (\textit{SaF}) partitions the dataset into multiple subsets, upon which the non-private Lloyd's algorithm is executed independently~\cite{Nissim2007}. 
The resulting centroids from each subset are then aggregated using a standard DP mechanism. 
SaF was later implemented within the GUPT system~\cite{Mohan2012}. 
However, experiments conducted by Su et al.~\cite{Su2016} suggest that DP-Lloyd consistently outperforms SaF across various synthetic and real datasets.
Synopsis-based methods take a different approach by first generating a synopsis of the dataset using a differentially private algorithm, then applying the \km clustering algorithm on this synopsis.
Qardaji et al.~\cite{Qardaji2012} proposed a synopsis technique for 2D datasets, which was extended and optimized by Su et al.~\cite{Su2016} to higher dimensions. 
However, this approach only outperforms DP-Lloyd in datasets with less than three dimensions~\cite{Su2016}.
Another, more theoretical, line of work focuses on minimizing approximation error bounds, without implementation or empirical evaluation~\cite{Feldman2009,Feldman2017,Balcan2017,Nissim2018,Stemmer2018,Ghazi2020}. 
The work of Ghazi et al.~\cite{Ghazi2020} concludes this long line of work by achieving the same approximation factor as the best non-private algorithms.
However, these methods suffer from superlinear running times, making them impractical for large datasets, where privacy is most important.
\subsubsection{Local DP} In contrast to central DP, the local model operates under the premise that no trusted curator is available.
This necessitates that individuals randomize their data locally before it is aggregated, which reduces utility by introducing a significantly higher level of noise (proportional to the number of data points). 
Several works have been proposed in this area~\cite{Nissim2018, Stemmer2021, Chang2021}, most of which focus on the theoretical error bounds rather than practical applications.
The most efficient and accurate local DP protocol by Chang et al.~\cite{Chang2021} could be extended to our setting, by instantiating the aggregation oracles with our MSA protocol.
This would significantly improve the accuracy of their protocol.
However, this would introduce high communication complexity ($O(\NumPartys \sum_{\PartyIndex=1}^{\NumPartys}{\DatasetSize_{\PartyIndex}})$), since the complete net-tree would need to be aggregated across all parties.
In contrast, our protocol transmits only centroids and counts per iteration, leading to a complexity of $O(\NumPartys \NumClusters \NumIter)$.
Another line of work studies private aggregation (a building block for federated DP \km) in local and shuffle DP models~\cite{Ghazi2020b,Balle2020}. 
However, these works cannot overcome the utility gap stemming from the noise being added to each data point.
Finally, Li et al.~\cite{Li2023} attempt to bridge the gap between these central and local models in the vertically federated setting.
Their approach relies on an untrusted curator aggregating noisy local centers and membership encodings from the clients to generate a private synopsis.
However, this approach is specifically tailored for the vertical setting and cannot be applied to our configuration.
Our work yields utility that is even better than state-of-the-art work in the central model~\cite{Su2016} while obtaining the trust assumptions of the local model and operating in the federated setting.

\section{Distributed DP \km}\label{sec:main_algo}
First, we overview our modified DP algorithm using the maximum radius constraint and relative updates.
Then, we describe \ourProtocol, including the initialization, assignment, and update steps.
Finally, we provide our security theorem, and discuss how we choose the various parameters.

\subsection{Radius Constrained DP \km}\label{sec:constrained_k_means}
In this section, we describe how we improve the clustering utility over the work of Su et al.~\cite{Su2016}, a contribution that is independently applicable to the central model of DP.
We observe that despite being common in the non-private literature~\cite{bennett2000constrained,Levy-Kramer_k-means-constrained_2018}, neither federated exact \km nor DP \km algorithms in the literature apply any constraints to the \km~objective.
In this work, we focus on a type of constraint not studied in the non-private literature: bounding the radius of each cluster.
This constraint is particularly useful in the context of DP \km, as it allows a tighter sensitivity bound when computing cluster updates relative to the previous centroid.

In the DP \km literature~\cite{Blum2005,Mcsherry2009, Dwork2011, Su2016}, a typical private clustering algorithm is to run \lloydsProtocol's algorithm, with DP noise added to the update step.
The DP mean is computed by two DP queries: one for the sum and one for the count.
The sum is then divided by the count as a post-processing step.
The challenge in these protocols is that the sensitivity of the sum is bounded by the domain size.
If the datapoints $\datapointi{\DatasetIndex}$ are contained in $[-\bound, \bound]^\Ddim$, then the $l_2$ sensitivity of the sum is $\bound \sqrt{\Ddim}$.
This is because, with no other constraints, there exists a worst-case data point that could add $\bound$ to the sum in all dimensions, regardless of the cluster it is assigned to.

In this work, we modify the \km~objective such that in addition to minimizing the WCSS (Eqn \ref{eqn:wcss_obj}), a constraint on the maximum radius of each cluster must also be satisfied. Specifically, no data point in a cluster can be more than \MXD~away from the cluster's centroid. Thus, the objective becomes to minimize (Eqn \ref{eqn:wcss_obj}) s.t.:

\begin{equation}\label{eq:maxdist_constraints}
    ||\datapointi{\DatasetIndex} - \CenterI{\ClusterID}||_2 \leq \MXD \quad \forall \datapointi{\DatasetIndex} \in \ClusterI{\ClusterID}, \forall \ClusterID
\end{equation}

To enforce this constraint, we modify the assignment step to not assign a data point to a cluster if it is more than \MXD~away from its nearest centroid.
Any unassigned data points are discarded and, therefore, do not contribute to any sum or count query.
Intuitively, bounding the maximum radius of each cluster limits how much a cluster can move by adding or removing a data point since a worst-case data point must be close to a cluster's centroid to be factored into the mean.
However, simply bounding the radius of the cluster does not tighten the sensitivity of the sum query.
This is because the sensitivity analysis must consider the worst-case cluster.
If a cluster is within \MXD~of the domain boundary, then the $\ell_2$ sensitivity of the sum query is still $\bound\sqrt{\Ddim}$.

To realize the reduction in sensitivity from bounding the radius, we must modify the update step as well as the assignment step.
Instead of perturbing the sum of the data points themselves, we perturb the difference between the data points and their assigned cluster centroid.
Specifically:
\begin{equation}
    \Diff{\ClusterID}^{(\IT)} = \sum\limits_{\datapointi{\DatasetIndex} \in \ClusterI{\ClusterID}^{(\IT)}} \datapointi{\DatasetIndex} - \NoiseCenterI{\ClusterID}^{(\IT-1)}
\end{equation}
Essentially, we are computing the ``updates'' to the centroids, rather than the centroids themselves.
This simple change has a significant impact on the sensitivity of the sum query.
The sensitivity of the relative sum is now bounded by the maximum radius parameter $\MXD$:
\begin{restatable}{theorem}{RadiusSensitivity}\label{thm:maxdist_sens}
    If the constraint \eqref{eq:maxdist_constraints} is satisfied, then:
    \begin{equation}
        \Delta^{\Diff{}} = \max\limits_{\Dataset, \Dataset' \in \mathcal{D}} ||\Diff{\ClusterID}^{(\IT)}(D) - \Diff{\ClusterID}^{(\IT)}(D')||_2 \leq \MXD
    \end{equation}
    for all clusters $\ClusterID \in [\NumClusters]$, and iterations $\IT \in [\NumIter]$.
\end{restatable}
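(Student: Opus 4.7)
The plan is to bound the sensitivity by a direct case analysis on what happens when a single data point is added or removed, exploiting the fact that the relative sum is defined with respect to the previous (noisy) centroid, which is also the centroid used during the assignment step.

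First I would fix an arbitrary cluster $\ClusterID$ and iteration $\IT$, and consider two neighbouring datasets $\Dataset$ and $\Dataset'$ that differ by exactly one point $\datapoint^{*}$ (say $\Dataset' = \Dataset \cup \{\datapoint^{*}\}$, under the unbounded neighbouring definition). Crucially, the centroid $\NoiseCenterI{\ClusterID}^{(\IT-1)}$ used in the definition of $\Diff{\ClusterID}^{(\IT)}$ is a function only of iteration $\IT-1$, so it is identical on both $\Dataset$ and $\Dataset'$. Likewise, for every other point $\datapointi{\DatasetIndex} \neq \datapoint^{*}$, the assignment decision in iteration $\IT$ depends only on the previous noisy centroids (not on $\datapoint^{*}$), so each such point is assigned to the same cluster under both datasets. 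Hence all terms in the relative sum cancel except the one corresponding to $\datapoint^{*}$.

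The proof then splits into two cases. If $\datapoint^{*}$ is not assigned to cluster $\ClusterID$ in iteration $\IT$, either because some other centroid is closer or because it violates the radius constraint and is discarded, then $\Diff{\ClusterID}^{(\IT)}(\Dataset) = \Diff{\ClusterID}^{(\IT)}(\Dataset')$ and the difference is zero. If $\datapoint^{*}$ is assigned to cluster $\ClusterID$, then
\begin{equation*}
\Diff{\ClusterID}^{(\IT)}(\Dataset') - \Diff{\ClusterID}^{(\IT)}(\Dataset) = \datapoint^{*} - \NoiseCenterI{\ClusterID}^{(\IT-1)},
\end{equation*}
and by the modified assignment rule enforcing constraint~\eqref{eq:maxdist_constraints} with respect to $\NoiseCenterI{\ClusterID}^{(\IT-1)}$, we have $\|\datapoint^{*} - \NoiseCenterI{\ClusterID}^{(\IT-1)}\|_2 \leq \MXD$. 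Taking the maximum over the two cases yields the bound $\Delta^{\DiffS} \leq \MXD$.

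I do not anticipate a significant obstacle here; the argument is essentially bookkeeping. The one subtlety worth being explicit about is that the previous centroid $\NoiseCenterI{\ClusterID}^{(\IT-1)}$ must be treated as data-independent (or released under its own privacy accounting) so that it is identical across $\Dataset$ and $\Dataset'$; this is exactly what lets the cancellation go through and why perturbing relative updates, rather than raw sums, is what unlocks the $\MXD$ bound in place of $\bound\sqrt{\Ddim}$.
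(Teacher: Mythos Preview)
Your proposal is correct and follows essentially the same approach as the paper: reduce to a single added point, observe that the previous (noisy) centroid and all other assignments are unchanged, and then do a two-case analysis on whether the extra point lands in cluster $\ClusterID$, invoking the radius constraint in the nontrivial case. The paper's proof carries out the same cancellation via the explicit $\Sm{\ClusterID}^{(\IT)} - \Cnt{\ClusterID}^{(\IT)}\NoiseCenterI{\ClusterID}^{(\IT-1)}$ expansion, and your remark about $\NoiseCenterI{\ClusterID}^{(\IT-1)}$ being fixed across neighbours is exactly the (implicit) assumption the paper relies on.
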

We prove this in Appendix~\ref{app:proof_maxdist_sens}. The intuition is that a worst-case data point only contributes its distance to the centroid (which is bounded by $\MXD$), rather than its distance to the origin.

Our maximum distance constraint naturally bounds the $l_2$ sensitivity as it bounds the Euclidean distance, and thus, Gaussian noise is a natural choice that also scales more efficiently to higher dimensions.
It also allows for a simple and tight privacy analysis using Gaussian-DP~\cite{dong22_gdp}.
We use the analysis of Balle and Wang~\cite[Algorithm 1]{balle18_gaussian} to obtain the noise multiplier $\sigma$.
We divide the privacy budget between the relative sum and the count following our analysis in Section~\ref{sec:dp_params} as:
\begin{equation}\label{eq:dp_params}
    \sigma^{\Diff{}} = \frac{\sigma\sqrt{1 +\sqrt{4\Ddim}}}{\sqrt[4]{4\Ddim}} \quad \sigma^{\Cnt{}} = \sigma\sqrt{1 +\sqrt{4\Ddim}}
\end{equation}
We then add noise as follows:
\begin{equation}\label{eq:sum_noise}
    \NoiseDiff{\ClusterID}^{(\IT)} = \Diff{\ClusterID}^t + \DPNoise^{\Diff{}} \quad \text{where} \quad \DPNoise^{\Diff{}} \sim \mathcal{N}(0, (\sigma^\Diff{})^2 \MXD^2 \NumIter)
\end{equation} 
\begin{equation}\label{eq:count_noise}
    \NoiseCnt{\ClusterID}^{(\IT)} = \Cnt{\ClusterID}^t + \DPNoise^{\Cnt{}} \quad \text{where} \quad \DPNoise^{\Cnt{}} \sim \mathcal{N}(0, (\sigma^\Cnt{})^2 \NumIter).
\end{equation}
We prove in Section~\ref{sec:privacy_proof} that adding noise in this way satisfies DP.
After adding noise, we compute the new centroid as:
\begin{equation}\label{eq:unshifted_centroid}
    \NoiseCenterI{\ClusterID}^{(\IT)} = \frac{\tilde{\Diff{\ClusterID}}^{(\IT)}}{\tilde{\Cnt{\ClusterID}}^{(\IT)}} + \NoiseCenterI{\ClusterID}^{(\IT-1)}.
\end{equation}

Without DP noise, this yields an equivalent cluster update. 
\begin{eqnarray}
    \CenterI{\ClusterID}^{(\IT)} 
    &=& \frac{\Sm{\ClusterID}^{(\IT)} - \Cnt{\ClusterID}^{(\IT)}\CenterI{\ClusterID}^{(\IT-1)}}{\Cnt{\ClusterID}^{(\IT)}} + \CenterI{\ClusterID}^{(\IT-1)} = \frac{\Sm{\ClusterID}^{(\IT)}}{\Cnt{\ClusterID}^{(\IT)}}
\end{eqnarray}
where $\Sm{\ClusterID}^{(\IT)} = \sum_{\datapointi{\DatasetIndex} \in \ClusterI{\ClusterID}^{(\IT)}} \datapointi{\DatasetIndex}$ is the sum of the data points. With the DP noise, we get an additional noise (error) term in the sum compared to noising the sum directly:
\begin{eqnarray}
    \NoiseSm{\ClusterID}^{(\IT)} &=& \NoiseDiff{\ClusterID}^{(\IT)} + \NoiseCnt{\ClusterID}^{(\IT)}\NoiseCenterI{\ClusterID}^{(\IT-1)} \\
    &=& \Sm{\ClusterID}^{(\IT)} - \Cnt{\ClusterID}^{(\IT)}\NoiseCenterI{\ClusterID}^{(\IT-1)} + \DPNoise^{\Diff{}} + (\Cnt{\ClusterID}^{(\IT)}+\DPNoise^{\Cnt{}})\NoiseCenterI{\ClusterID}^{(\IT-1)}\\
    &=& \Sm{\ClusterID}^{(\IT)} + \DPNoise^{\Diff{}} + \DPNoise^{\Cnt{}}\NoiseCenterI{\ClusterID}^{(\IT-1)}.
\end{eqnarray}
However, as we show in Section~\ref{sec.utility_eval}, this additional error is compensated for by the increase in utility from reducing the sensitivity from $\bound \sqrt{\Ddim}$ to $\MXD$. The maximum radius further decreases error by reducing the effect of outliers, as data points far away from any centroid will not (and, in some cases, should not) be assigned to any cluster. We also use the maximum radius constraint as an additional post-processing constraint, which we call \textit{Radius Clipping}. Namely, if a noisy centroid is computed to be more than $\MXD$ away from the previous centroid, we truncate it to be $\MXD$ away.

\subsection{Overview of \ourProtocol}
In this section, we describe our protocol for federated DP \km, \ourProtocol.
The focus of \ourProtocol~is to create the most efficient and accurate protocol possible under the threat model defined in Section~\ref{sec:prob_state}.
Specifically, we refrain from adding additional noise or computations that would be needed to handle client failures or collusion between clients and the server.
We leave it to future work to adapt our approach to use more resilient aggregation protocols~\cite{Bonawitz2017, Kadhe2020fastsecagg, So2021turbo, Fereidooni2021safelearn}.

\subsubsection{Protocol Intuition}

A naive IND-CDP-MPC implementation of Lloyd's algorithm would use an end-to-end secure protocol (e.g.,~\cite{Mohassel2020}) to compute the exact centroids in every iteration, then add DP noise using a secure computation.
This entails gigabytes of communication and tens of minutes of runtime as we show in Section~\ref{sec:eval}.

Instead, our key insight is that the tightest DP analysis for Lloyd's algorithm is an (adaptively)-compositional proof~\cite{Su2016,Blum2005} that assumes that the (perturbed) intermediate computations are published in each iteration.
Therefore, revealing these intermediate DP updates does not violate securely simulating the TTP, as required by the IND-CDP-MPC model \footnote{This insight would not apply to any future DP analyses that achieve privacy amplification by specifically hiding intermediate computations. Our approach would still be applicable, but at would not benefit from this amplification.}.
This allows us to compute divisions and assignments locally, rather than in a secure computation protocol.

We note that leaking the intermediate computations is not possible in the exact MPC model.
Thus our approach is not applicable to the exact MPC model (where the output is not private).
In this paper, we demonstrate that, by working within the computational differential privacy framework, the state-of-the-art DP version of Lloyd's algorithm can be computed significantly more efficiently (five orders of magnitude faster) in MPC than its non-private counterpart.
\subsection{Algorithm Description}
Algorithm~\ref{alg:Local_update} overviews the protocol from the perspective of a single client.
Following the outline of Lloyd's algorithm, the following sections describe how we design each of its main steps: initialization, assignment, and update.

\subsubsection{Initialization: Sphere Packing}
We modify the initialization so that it can be carried out in a federated manner.
We employ the sphere packing initialization approach of Su et al.~\cite{Su2016} as it was shown to outperform random initialization.
The sphere packing approach is data-independent and thus does not use any privacy budget.
The process can be outlined as follows:
\begin{enumerate}
    \item Initialize a radius parameter $a$.
    \item For $\ClusterID \in  \{1, 2, ..., \NumClusters\}$, generate a point $\CenterI{\ClusterID}$ such that it is at least of distance $a$ away from the domain boundaries and at least of distance $2a$ away from any previously chosen centroid. If a randomly generated point does not meet this condition, generate another one.
    \item If, after $100$ repeated attempts, it is not possible to find such a point, decrease the radius $a$ and repeat the process.
  \end{enumerate}
The radius $a$ is determined via a binary search
to find the maximum $a$ that allows for the generation of $\NumClusters$ centroids.
In Algorithm~\ref{alg:Local_update}, each client independently calls the \texttt{Initialization} function (Line~\ref{line:init}) with the same random seed, which results in each client starting with the same centroids.

\subsubsection{Assignment: Radius Constrained \km}

In Line~\ref{line:min_cluster}, each client locally computes the closest cluster to each of their data points.
If the data point is within $\MXD$ of the cluster's centroid, it is assigned to that cluster (Line~\ref{line:assignment}).
Any points further than $\MXD$ from the cluster's centroid are not assigned to any cluster. 
We discuss how we set $\MXD$ in Section~\ref{sec:radius}.

\begin{algorithm}[t]
    \caption{\ourProtocol~from $\Party{\PartyIndex}$'s perspective}\label{alg:Local_update}
    \begin{algorithmic}[1]
    \Statex \textbf{Inputs: }Local Dataset $\Dataset_\PartyIndex$.
    \Statex \textbf{Output:} Cluster Centres $\NoiseCenters^{(\NumIter)}$
    \State $\NoiseCenters^{(0)} =$ \Call{Initialization}{$seed$}\label{line:init} 
    \For{$t \in \{1:\NumIter\}$}
    \State // \textbf{Assignment Step}
    \For{$\datapointi{\DatasetIndex} \in \Dataset_\PartyIndex$}
    \State $j' = \argmin\limits_{j'\in[\NumClusters]} ||\datapointi{\DatasetIndex} - \NoiseCenterI{j'}^{(\IT-1)}||_2$\label{line:min_cluster}
    \If{$||\datapointi{\DatasetIndex} - \NoiseCenterI{j'}^{(\IT-1)}||_2 < \MXD$}
    \State $\ClusterI{\PartyIndex j'}^{(\IT)} \gets \datapointi{\DatasetIndex}$\label{line:assignment}
    \EndIf
    \EndFor
    \State // \textbf{Local Update}
    \For{$\ClusterID \in \{1, \dots, \NumClusters\}$} 
    \State Compute $\bar{\Diff{}}_{\PartyIndex\ClusterID}^{(\IT)} = \sum\limits_{\datapointi{\DatasetIndex} \in \ClusterI{\PartyIndex\ClusterID}^{(\IT)}} \datapointi{\DatasetIndex} - \NoiseCenterI{\ClusterID}^{(\IT-1)}$\label{line:local_diffs}
    \State Compute $\bar{\Cnt{}}_{\PartyIndex\ClusterID}^{(\IT)} = |\ClusterI{\PartyIndex\ClusterID}^{(\IT)}|$\label{line:local_counts}
    \EndFor
    \State // \textbf{Global Update}
    \State $\widehat{\boldsymbol{U}} = $ \Call{GlobalMSA}{$\bar{\Diff{}}_{\PartyIndex\ClusterID}^{(\IT)}$, $\sigma^\Diff{}$, seed, $\IT$}\label{line:global_diffs}
    \State $\widehat{\boldsymbol{C}} = $ \Call{GlobalMSA}{$\bar{\Cnt{}}_{\PartyIndex\ClusterID}^{(\IT)}$, $\sigma^\Cnt{}$, seed, $\IT$}\label{line:global_counts}
    \State // \textbf{Post Process Result}
    \For{$\ClusterID \in \{1, \dots, \NumClusters\}$}
    \State $\widehat{\mu}_{\ClusterID} = \frac{\widehat{\boldsymbol{U}}}{\widehat{\boldsymbol{C}}} + \NoiseCenterI{\ClusterID}^{(\IT-1)}$\label{line:compute_centroid}
    \EndFor
    \State $\NoiseCenters^{(\IT)} =$  \Call{Fold}{$\widehat{\boldsymbol{\mu}}$}\label{line:fold}
    \EndFor
    \end{algorithmic}
\end{algorithm}

\subsubsection{Local Update}

The client has already locally assigned each of their data points to a cluster, following the constraints in the previous step (Line~\ref{line:assignment}).
Next, they compute the relative sum and the count for each cluster using their local dataset.
We call this the local update step.
The output of the relative sum (Line~\ref{line:local_diffs}) and the count (Line~\ref{line:local_counts}) are two matrices of dimensions $(\NumClusters\times\Ddim)$ and $(\NumClusters\times 1)$ respectively.

\subsubsection{Global Update: Masked Secure Aggregation}
We modify the update step to release a perturbed version of relative sums and the counts to each client under the CDP security model.
By publishing the previous noisy centroids, the assignment and local update computation can be performed locally by each client instead of using a secure computation protocol.
In the global update step (Lines~\ref{line:global_diffs} and~\ref{line:global_counts}), we privately aggregate (sum) the local updates of all clients, perturb the result, and reveal the noisy global relative sums and counts for the next round.

We describe our aggregation protocol in Figure~\ref{fig:MSA_protocol}.
We call this protocol Masked Secure Aggregation (MSA).
In MSA, similar to regular secure aggregation protocols~\cite{Bonawitz2017}, a service provider \Server~securely aggregates values from \NumPartys~clients, all while being oblivious to every client's contributions. 
However, in MSA, the server is also oblivious to the result of the aggregation operation; the server only acts as an aggregator (who also adds DP noise) and is oblivious to both the input and output.
Each client $\Party{\PartyIndex}$ possesses a private value (a matrix $\VV{\PartyIndex} \in \mathbf{R}^{\NumClusters \times \Ddim}$, where $\mathbf{R}$ could be $\mathbb{Z}_{2^{64}}$).
These clients aim to collectively compute the element-wise sum of their private matrices: $\boldsymbol{\VV{}}=\sum_{\PartyIndex=1}^{\NumPartys}{\frac{\VV{\PartyIndex}}{\NumPartys}}$.
In addition to the client's values, we assume the noise scale $\sigma$ and a random $seed$ (that all clients know) are also provided.

\emph{Clients Send Data.} The first step in the protocol is for each of the $\NumPartys$ clients to generate a random mask set.
The random mask set is composed of \NumPartys~random matrices of the same dimensions as the client items: $\{\MM{1}, \MM{2}, \dots \MM{\NumPartys}\}$ where each value is sampled uniformly from $\mathbf{R}$.
Because all the clients have access to a shared seed and $PRNG$ they can each compute the entire set locally.
In Line~\ref{line:get_mask}, each client samples the entire set and sums it (in Line~\ref{line:agg_mask}) to get the global mask matrix $\boldsymbol{\MS}$ (which will be used to decrypt later).

In Line~\ref{line:fixed_point}, each private value is converted to a fixed-point format, \fixedVV{\PartyIndex}, by scaling it up with a power-of-2 scale factor, \SF. Formally, $\fixedVV{\PartyIndex} = \round{\VV{\PartyIndex} \times \SF}$. 
All fractional values are rounded to the nearest representable value in this fixed-point representation. The choice of the scale factor, \SF, determines the precision of the representation\footnote{A larger \SF~allows for greater precision but also reduces the number of bits available for the integer part of the number, which might cause overflows. A workaround is to increase the bit-width of the operations, which increases the computational load and the communication cost.}. In practice, we empirically choose $\Prec=16$.

Then, in Line~\ref{line:client_enc}, each client encrypts (masks) their input $\fixedVV{\PartyIndex}$ by adding their share of the mask $\MM{\PartyIndex}$.
Finally, each client sends their masked values to the server.
This step masks the actual value \VV{\PartyIndex} from the server because \MM{\PartyIndex} acts as a one-time pad. 
Note that this masked value is never sent to other clients, as they would be able to unmask it easily.

\begin{figure}[t]
\noindent\maxsizebox{\columnwidth}{!}{
  \begin{varwidth}{1.4\columnwidth}
\game[linenumbering, skipfirstln, mode=text]{\textbf{GlobalMSA}($\VV{\PartyIndex},\sigma, seed, \IT$)}{%
\textbf{Clients} $\PartyIndex \in \{1,\ldots,\NumPartys\}$ \<\< \textbf{Server $\Server$} \pcskipln\\
// Each client $\PartyIndex$ masks \\
\label{line:get_mask}$\{\MM{1},...,\MM{\NumPartys}\} = PRNG(seed, \IT)$   \\
\label{line:agg_mask}$\MS \equiv \sum_{\PartyIndex=1}^{\NumPartys} \MM{\PartyIndex} $\<\< \\
\label{line:fixed_point}$\fixedVV{\PartyIndex} \coloneqq \round{\VV{\PartyIndex} \times \SF}$ \<\< \\
 \label{line:client_enc}$Enc_{\PartyIndex}(\VV{\PartyIndex}) \coloneqq \fixedVV{\PartyIndex} + \MM{\PartyIndex}$ \<\< \\
\< \sendmessageright*[1.25cm]{Enc_{\PartyIndex}(\VV{\PartyIndex})} \< \pcskipln \\
\<\< $\text{// Compute sum}$\\
\label{line:server_sums}\<\<   $\fixedVV{}+ \MS{} \equiv \sum_{\PartyIndex=1}^{\NumPartys{}}{Enc_{\PartyIndex}(\VV{\PartyIndex})} $ \pcskipln \\
\<\< $\text{// Add noise value}$\\
\<\< $\DPNoise \sim \mathcal{N}(0,\sigma^{2})$\\
\<\< $\tilde{\DPNoise} = \round{\DPNoise \times \SF}$\\ 
\<\< $Enc(\VV{}+\DPNoise) \equiv \fixedVV{}+\MS{}+\tilde{\DPNoise} $\\
\< \sendmessageleft*[1.25cm]{Enc(\VV{}+\DPNoise)} \< \\
$\VV{}+\DPNoise \approx (Enc(\VV{}+\DPNoise) - \MS)/\SF$\<\<}
\end{varwidth}
}
\caption{Global Masked Secure Aggregation Protocol with $\NumPartys$ Clients.}
\label{fig:MSA_protocol}
\end{figure}

\emph{Server Aggregates Data.}
Upon receiving the masked matrices, the server first sums over each client's contribution in Line~\ref{line:server_sums}.
This yields the masked global sum $\boldsymbol{\VV{}}+ \boldsymbol{\MS{}}$.
The server then samples from a zero mean Gaussian distribution with standard deviation equal to the supplied $\sigma$ for each entry in the result matrix.
This noise must also be converted to fixed-point by computing $\tilde{\DPNoise} = \round{\DPNoise \times \SF}$. We show why this preserves differential privacy in Appendix~\ref{app:proof_quant}.
Finally, the server adds the noise $\tilde{\DPNoise}$ to the sum and broadcasts it to the clients.

\emph{Client Unmasks Data.}
Upon receiving the result from the server, each client must unmask the result.
They do this by simply subtracting the mask $\MS{}$ that they computed in Line~\ref{line:agg_mask}.
After unmasking, each client must scale down the result to retrieve the correct answer. This is done by dividing the unmasked sum by the scale factor, \SF, to reverse the initial scaling operation. 

\subsubsection{Post-processing}
In Algorithm~\ref{alg:Local_update}, each client locally post-processes the results to obtain the centroids for the next iteration.

First, each party divides the relative sum by the count and shifts the result by the previous centroid to get the new centroid (Line~\ref{line:compute_centroid}) following Eqn~\ref{eq:unshifted_centroid}.
Then, if the new centroid is more than $\MXD$ away from the previous centroid, we truncate it to be $\MXD$ away.

Finally, we apply a post-processing step to the centroids to ensure they remain within the domain.
A naive post-processing strategy is to simply truncate out-of-bounds centroids to the boundary.
However, in practice, we find (in Appendix~\ref{app:ablation}) that \emph{folding}~\cite{diffprivlib} the value (reflecting it over the boundary) gives better utility.
More formally, the operation folds a value $x$ into the range $[-\bound, \bound]$ through modular arithmetic. We first compute $(x + \bound) \bmod(2\bound)$, then if this result exceeds $\bound$, we reflect it about $\bound$ by subtracting it from $2\bound$. Finally, we return the value to the target range by subtracting $\bound$. This approach creates a periodic folding pattern that naturally reflects values across the boundaries while preserving distances from the nearest boundary point.
We show in Appendix \ref{app:ablation} that folding improves the algorithm's utility.

\subsection{Privacy Analysis}\label{sec:privacy_proof}
\ourProtocol~allows for a central party to add noise, retaining the utility of the central model of DP. However, since the server only interacts with masked values, and the clients can only observe a differentially private view of the final (noised) centroids, the protocol provides a level of privacy akin to that of the local model of DP. 

We state the end-to-end security Theorem of our algorithm and defer the proof to Appendix~\ref{app:proof_sec_proof}.

\begin{restatable}{theorem}{SecProof}\label{thm:sec_proof}
Algorithm~\ref{alg:Local_update} ensures $(\epsilon(\lambda),\delta)$-IND-CDP-MPC in the presence of a semi-honest, polynomial time adversary who controls at most a single party.
\end{restatable}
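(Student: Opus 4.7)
The plan is to split by which party the semi-honest adversary $\mathcal{A}$ controls --- either the server $S$ or a single client $p_i$ --- and to attack each case with a hybrid argument that first replaces the PRG-derived masks by uniformly random masks (paying a $negl(\lambda)$ term via PRG security) and then bounds what remains of the adversary's view by the central-DP analysis of the underlying aggregate release.

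For the case where $\mathcal{A}$ corrupts $S$, the server's entire view over the $\NumIter$ iterations consists of the masked matrices $\mathit{Enc}_i(v_i) = \tilde v_i + r_i$ it receives from each honest client, plus the broadcast value it itself assembled in Line~\ref{line:server_sums}. Since the masks are sampled from a PRG seeded with a secret the server does not hold, I would pass through hybrids that replace each $r_i$ by a fresh uniform element of $\mathbf{R}^{k \times d}$; each hop is detectable with probability at most $negl(\lambda)$ by PRG security. In the final hybrid each $r_i$ is a one-time pad, so the server's view distribution is identical on $D$ and $D'$, and Definition~\ref{def:sec_model} is satisfied with $\epsilon = 0$ up to the accumulated negligible term.

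For the case where $\mathcal{A}$ corrupts a single client $p_i$, its view comprises $D_A = D_i$, the deterministic initialization, the masks $r_1,\dots,r_M$ that $p_i$ itself derives from the shared seed (which are independent of the honest parties' data), and, per iteration $t$, the unmasked global aggregates $\widehat U^{(t)}$ and $\widehat C^{(t)}$ obtained in Lines~\ref{line:global_diffs}--\ref{line:global_counts}. All other per-iteration objects --- the next centroids from Line~\ref{line:compute_centroid}, the radius clipping, and the fold --- are deterministic functions of $D_i$ and these aggregates, hence add no leakage by the post-processing lemma. It therefore suffices to bound the DP cost of releasing the sequence $\bigl(\widehat U^{(t)}, \widehat C^{(t)}\bigr)_{t=1}^{\NumIter}$ to $p_i$. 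Theorem~\ref{thm:maxdist_sens} pins the $\ell_2$ sensitivity of the relative-sum query at $\MXD$ and the count query at $1$; combined with the calibrations in~\eqref{eq:sum_noise}--\eqref{eq:count_noise}, Theorem~\ref{thm:g_mech} gives per-iteration GDP parameters $1/(\sigma^{\DiffS}\sqrt{\NumIter})$ and $1/(\sigma^{\CntS}\sqrt{\NumIter})$; parallel composition absorbs the $k$ disjoint clusters, and Theorem~\ref{thm:gdp_comp} aggregates across $\NumIter$ rounds and the two query types into a $\sqrt{(1/\sigma^{\DiffS})^2 + (1/\sigma^{\CntS})^2}$-GDP bound, which~\eqref{eq:dp_params} collapses to $1/\sigma$-GDP. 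Theorem~\ref{thm:GDPtoDP}, implemented numerically via Balle--Wang, then converts this into the target $(\epsilon,\delta)$; Appendix~\ref{app:proof_quant} is invoked to verify that rounding the Gaussian draws to fixed point preserves the guarantee.

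The main obstacle is cleanly bridging the computational step and the statistical step. The DP analysis above bounds the leakage of the idealized noised aggregate, but in the real protocol the corrupted client sees only the unmasked output reconstructed from the server's broadcast, which involves the PRG-derived $\MS$ as well as the rounded fixed-point Gaussian draw. I would handle this by arguing that, in the PRG-replacement hybrid, subtracting $\MS$ is an exact bijection from the server's broadcast to the Gaussian-noised aggregate, so the induced distribution on the client's view of $(\widehat U^{(t)}, \widehat C^{(t)})$ coincides with the plain noised release analyzed above. The overall distinguishing advantage is then the sum of the PRG $negl(\lambda)$ term and the statistical $(\epsilon,\delta)$ term, matching the form required by Definition~\ref{def:sec_model}.
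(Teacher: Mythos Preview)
Your proposal is correct and follows essentially the same approach as the paper: a case split over the corrupted party, with the server handled via the PRG/one-time-pad argument (yielding $\epsilon=0$ up to $negl(\lambda)$) and the client handled by the central-DP analysis of the released per-iteration aggregates using Theorem~\ref{thm:maxdist_sens}, GDP composition (Theorem~\ref{thm:gdp_comp}) across iterations and query types, parallel composition over clusters, and the GDP-to-DP conversion (Theorem~\ref{thm:GDPtoDP}). Your treatment is in fact slightly more careful than the paper's in two places---you explicitly flag the fixed-point quantization issue (Appendix~\ref{app:proof_quant}) and the bijection argument bridging the masked broadcast to the idealized noised release---though the PRG-replacement hybrid is not actually needed on the client side since the corrupted client already holds the seed and hence the exact masks.
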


\subsection{Error Analysis}\label{sec:dp_params}
To analyze the error of our approach, we follow a similar approximate error analysis as Su et al.~\cite{Su2016}.
The purpose of the analysis is primarily to choose the ratio of the privacy parameters and the number of iterations.
Thus, the analysis makes a series of approximations.
Following Su et al., we consider a single iteration and a single cluster for this analysis. 
To simplify the notation, we omit the cluster index $\ClusterID$ and instead index the variables by the dimension $\DimI$.
We analyze the mean-squared error between the true centroid ($\CenterI{}$) and the differentially-private centroid ($\NoiseCenterI{}$) for one iteration across all dimensions.
\begin{equation}
    MSE(\NoiseCenterI{}^{(\IT)}) = \EX{ \sum\limits_{\DimI=1}^{\Ddim} (\CenterI{\DimI}^{(\IT)} - \NoiseCenterI{\DimI}^{(\IT)})^2 }
\end{equation}
We first expand the following term using the definition of our DP mechanism from Section~\ref{sec:constrained_k_means}:
\begin{eqnarray*}
    \CenterI{\DimI}^{(\IT)} - \NoiseCenterI{\DimI}^{(\IT)} &=& \frac{(\Cnt{}^{(\IT)}+\DPNoise^\Cnt{})\CenterI{\DimI}^{(\IT)}}{\Cnt{}^{(\IT)}+\DPNoise^\Cnt{}} - \frac{\Sm{\DimI}^{(\IT)} + \DPNoise^{\Diff{}}_\DimI + \DPNoise^{\Cnt{}}\NoiseCenterI{\DimI}^{(\IT-1)}}{\Cnt{}^{(\IT)}+\DPNoise^\Cnt{}}\\
    &=& \frac{\DPNoise^{\Cnt{}}(\CenterI{\DimI}^{(\IT)} - \NoiseCenterI{\DimI}^{(\IT-1)}) - \DPNoise^{\Diff{}}_\DimI}{\Cnt{}^{(\IT)}+\DPNoise^\Cnt{}}.
\end{eqnarray*}
Let $\nabla_\DimI$ = $\CenterI{\ClusterID}^{(\IT)} - \NoiseCenterI{\ClusterID}^{(\IT-1)}$. Then, after squaring, we get:
\begin{equation*}
    (\CenterI{\DimI}^{(\IT)} - \NoiseCenterI{\DimI}^{(\IT)})^2 = \frac{(\DPNoise^{\Cnt{}})^2(\nabla_\DimI)^2 - 2\DPNoise^{\Cnt{}}\DPNoise^{\Diff{}}_\DimI\nabla_\DimI +(\DPNoise^{\Diff{}}_\DimI)^2 }{(\Cnt{}^{(\IT)}+\DPNoise^\Cnt{})^2}.
\end{equation*} 
After taking the expectation over all clusters and assuming that $\Cnt{}+\DPNoise^\Cnt{} \approx \DatasetSize/\NumClusters$, following Su et al.~\cite{Su2016}, we get:
\begin{eqnarray*}
    \mse{\NoiseCenterI{\DimI}^{(\IT)}} &\approx& \frac{\NumClusters^3}{\DatasetSize^2}\left(\EX{(\DPNoise^{\Cnt{}})^2}\EX{\nabla_\DimI}^2 + \EX{(\DPNoise^{\Diff{}}_\DimI)^2}\right)\\
    &=& \frac{\NumClusters^3}{\DatasetSize^2}\left(\Var{\DPNoise^{\Cnt{}}}\EX{\nabla_\DimI}^2 + \Var{\DPNoise^{\Diff{}}_\DimI}\right)
\end{eqnarray*}
where, in the first line, the middle term goes is zero as the noise terms are i.i.d. and zero mean.
The second line holds because each $\DPNoise$ is an independent variable with zero mean, and so
\[\E[(\DPNoise^{})^2]=\Var{\DPNoise^{}}-(\E[\DPNoise^{}])^2=\Var{\DPNoise^{}}.\]
Finally, we approximate $\E[\sum_{\DimI=1}^{\Ddim} \nabla_\DimI] \approx \frac{\MXD}{2}$.
We argue that if the data were uniformly distributed in a hypersphere of radius $\MXD$ around the centroid, the expected distance from the centroid would be $\MXD/2$.
This gives the final error term over all dimensions:
\begin{equation}\label{eq:first_error}
    \mse{\NoiseCenterI{}^{(\IT)}} \approx \frac{\NumClusters^3}{4\DatasetSize^2}\left(\Var{\DPNoise^{\Cnt{}}}\MXD^2 + 4\Ddim\Var{\DPNoise^{\Diff{}}}\right)
\end{equation}

\paragraph{Privacy Budget Splitting}
We now use the approximate analysis in Eqn~\ref{eq:first_error} to determine the optimal privacy budget split between the relative sum and the count.
We substitute the variance of the noise terms from Eqn~\ref{eq:dp_params} into Eqn~\ref{eq:first_error}:
\begin{equation}
    \mse{\NoiseCenterI{}^{(\IT)}} \approx \frac{\NumClusters^3\MXD^2\NumIter}{4\DatasetSize^2}\left((\sigma^{\Cnt{}})^2 + 4\Ddim(\sigma^{\Diff{}})^2\right).
\end{equation}
In our privacy analysis, we compute the noise multiplier $\sigma$ such that we achieve $\frac{1}{\sigma}$-GDP. Thus, we need to split the noise multiplier between the relative sum and the count following Theorem~\ref{thm:gdp_comp}:
\begin{equation}
    \sqrt{\left(\frac{1}{\sigma^\Diff{}}\right)^2 + \left(\frac{1}{\sigma^\Cnt{}}\right)^2} = \frac{1}{\sigma}
\end{equation}

We minimize the per iteration error, subject to this constraint, using Lagrange multipliers, which gives:
\begin{equation}
    \sigma^{\Cnt{}} = \sqrt[4]{4\Ddim}\sigma^{\Diff{}}.
\end{equation}
Simply scaling each sigma by this ratio gives Eqn~\ref{eq:dp_params}.
Substituting Eqn~\ref{eq:dp_params} back into the error analysis gives:
\begin{equation}
    \mse{\NoiseCenterI{}^{(\IT)}} = \frac{\NumClusters^3\MXD^2\NumIter\sigma^2 (1+\sqrt{4\Ddim})^2}{4\DatasetSize^2}
\end{equation}

\paragraph{Optimal Number of Iterations}
Using this analysis, we can determine a heuristic for the number of iterations.
Following Su et al.~\cite{Su2016}, we assume that the error in each iteration is less than $0.004\times\bound$.
Re-arranging for $\NumIter$ gives:
\begin{equation}
    \NumIter < \frac{4\DatasetSize^2(0.004)}{\NumClusters^3\MXD^2\sigma^2 (1+\sqrt{4\Ddim})^2}.
\end{equation}
However, as Su et al.~\cite{Su2016} explain, in practice, there is no need to go beyond seven iterations and at least two iterations are needed to gain useful results. Therefore, we truncate this analysis such that the number of iterations is in the range $[2,7]$.

\subsection{Setting the Maximum Radius Parameter}\label{sec:radius}
We derive a dimensionality-aware radius parameter $\MXD$ based on the geometric properties of the feature space.
Consider a $\Ddim{}$-dimensional bounded feature space where each dimension is constrained to the interval $[-\bound{}, \bound{}]$. The domain diagonal $\Ddiag$, representing the maximum possible distance between any two points in this space, is given by:
\begin{equation}
\Ddiag = \sqrt{\Ddim{}} \times (2\bound{})
\end{equation}
When partitioning this space into $\NumClusters{}$ clusters, each cluster occupies a fraction of the total volume. The total volume of the feature space is $(2\bound{})^{\Ddim{}}$. Assuming uniform partitioning, each cluster occupies a volume of $(2\bound{})^{\Ddim{}} / \NumClusters{}$. Consequently, the effective length of each cluster along any dimension can be expressed as:
\begin{equation}
L_c = \frac{2\bound{}}{\NumClusters{}^{1/\Ddim{}}}
\end{equation}
Assuming hypercube-shaped clusters, the maximum distance from a centroid to any point within the cluster (the cluster radius) is half the cluster's diagonal. This can be calculated as:
\begin{equation}
\text{Cluster Radius} = \frac{L_c}{2} \times \sqrt{\Ddim{}} = \frac{\bound{}}{\NumClusters{}^{1/\Ddim{}}} \times \sqrt{\Ddim{}} = \frac{\Ddiag}{2\NumClusters{}^{1/\Ddim{}}}
\end{equation}
Based on this analysis, we propose a heuristic $\eta_{\Ddim{}}$ to constrain the cluster radius in a $\Ddim{}$-dimensional space as:
\begin{equation}
\eta_{\Ddim{}} = \frac{\alpha \Ddiag}{2\NumClusters{}^{1/\Ddim{}}}
\end{equation}
where $\alpha$ is a scaling factor we set to $0.8$ based on experiments using synthetic data in Appendix~\ref{app:ablation}.

In scenarios with low~\Ddim{}~and high~\NumClusters{}, $\MXD$ becomes increasingly restrictive as $\NumClusters{}^{1/\Ddim{}}$ grows larger, providing tighter bounds on cluster radii. However, as dimensions increase, the curse of dimensionality necessitates larger cluster radii to accommodate the exponential growth of volume in the space, $(2\bound{})^{\Ddim{}}$. Our approach takes this into account through $\NumClusters{}^{1/\Ddim{}}$ approaching unity in high dimensions. This dimensional scaling aligns with the intuition that radius constraints are most meaningful in lower-dimensional spaces, where cluster boundaries remain well-defined, whereas, in high-dimensional spaces, the curse of dimensionality renders such constraints increasingly less effective as distance metrics lose their discriminative power.

We consider two distinct approaches to implement the radius constraint. In the ``Constant'' approach, the radius constraint is consistently enforced throughout the clustering process. Alternatively, the ``Step'' approach initializes with a broader constraint of $\Ddiag/2$ at iteration zero and then transitions to the computed radius $\MXD$ for the remaining iterations. This two-phase strategy allows for initial flexibility in centroid placement while gradually imposing stricter constraints for fine-tuning the centroids. We show in Appendix~\ref{app:ablation} that the ``Step'' approach is superior.

\section{Evaluation}\label{sec:eval}
In this section, we provide an extensive evaluation of our work in terms of the following questions:
\begin{enumerate}[label=Q\arabic*]
    \item How does the utility of \ourProtocol~ compare with state-of-the-art in central model~\cite{Su2016} for interactive DP \km? \label{q:util_compare}
    \item How does the utility of \ourProtocol~scale with varying the number of dimensions and number of clusters? \label{q:util_scale}
    \item How does the runtime, communication, and number of rounds for \ourProtocol~compare with the state-of-the-art in federated \km using secure computation~\cite{Mohassel2020}? \label{q:time_compare}
    \item How does the runtime and communication of \ourProtocol~ scale with varying the number of dimensions, clusters, and data points? \label{q:time_scale}
\end{enumerate}
\begin{figure*}[t]
    \centering
    \begin{subfigure}{\textwidth}
        \centering
        \includegraphics[width=0.5\textwidth]{figs/metrics/legend.pdf}
    \end{subfigure}
    \begin{subfigure}{0.19\textwidth}
        \includegraphics[width=\textwidth]{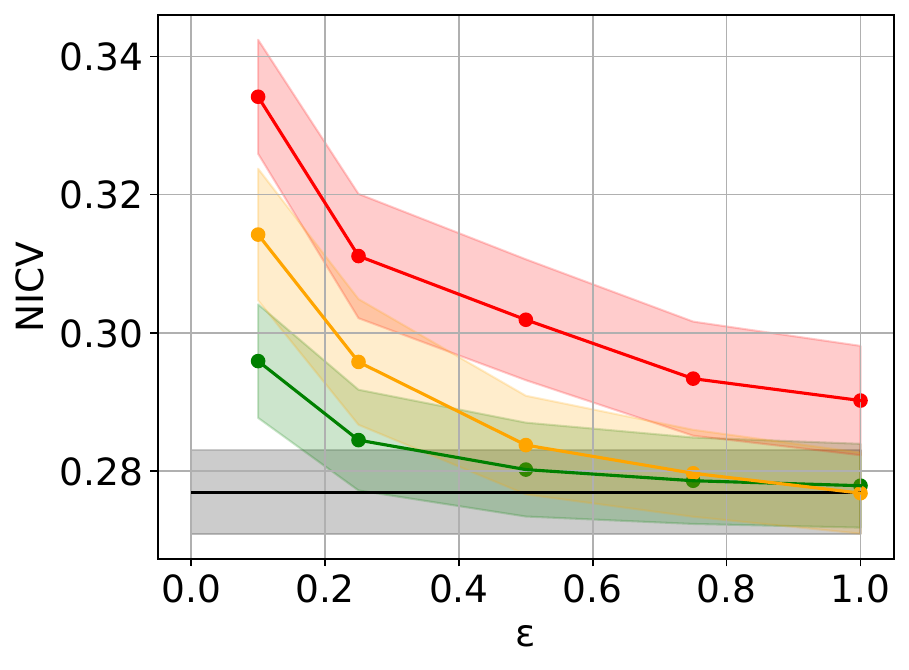}
        \caption{Adult}
        \label{fig:adult}
    \end{subfigure}
    \begin{subfigure}{0.19\textwidth}
        \includegraphics[width=\textwidth]{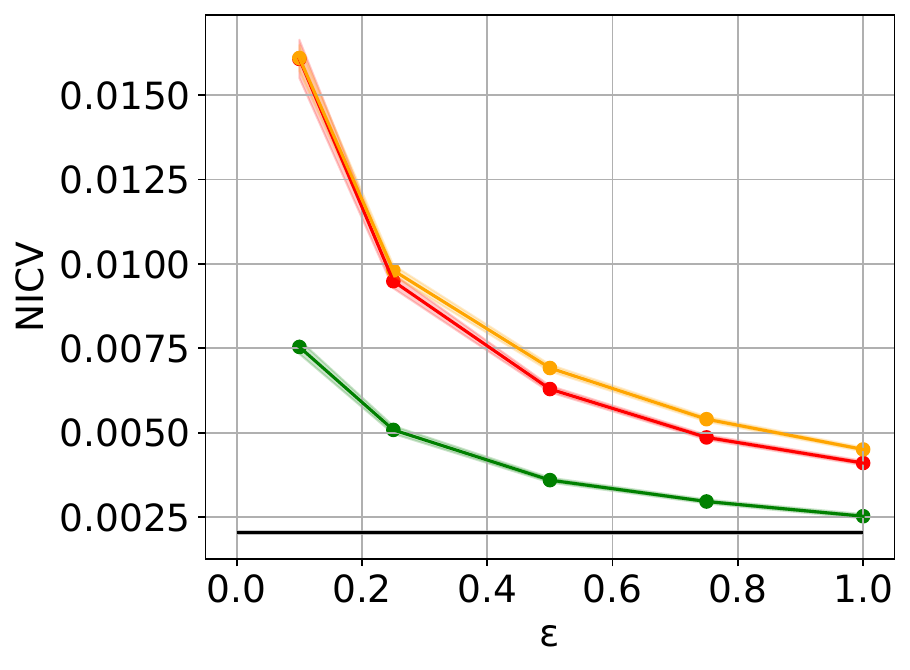}
        \caption{Birch2}
        \label{fig:birch2}
    \end{subfigure}
    \begin{subfigure}{0.19\textwidth}
        \includegraphics[width=\textwidth]{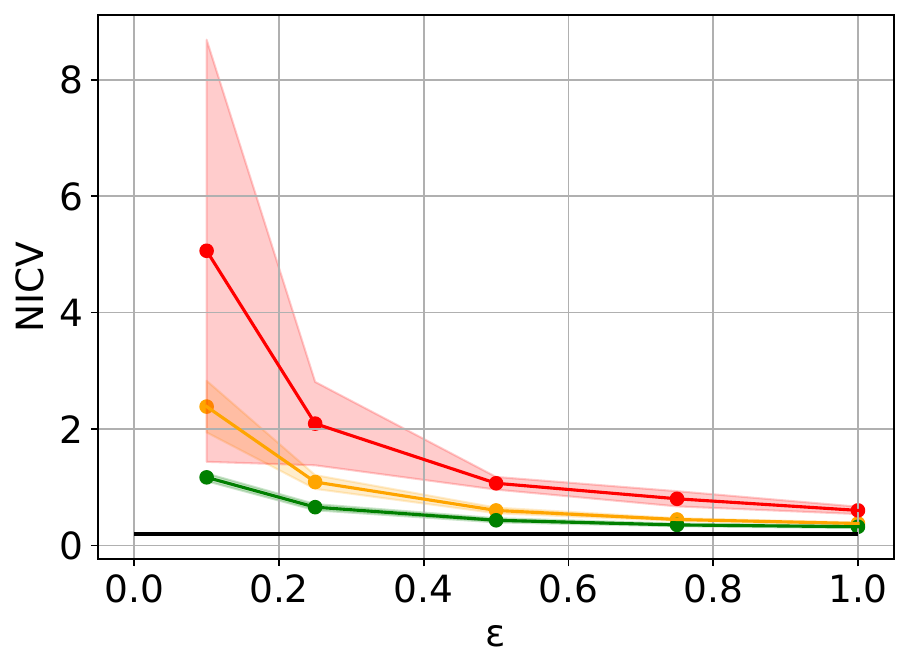}
        \caption{Iris}
        \label{fig:iris}
    \end{subfigure}
    \begin{subfigure}{0.19\textwidth}
        \includegraphics[width=\textwidth]{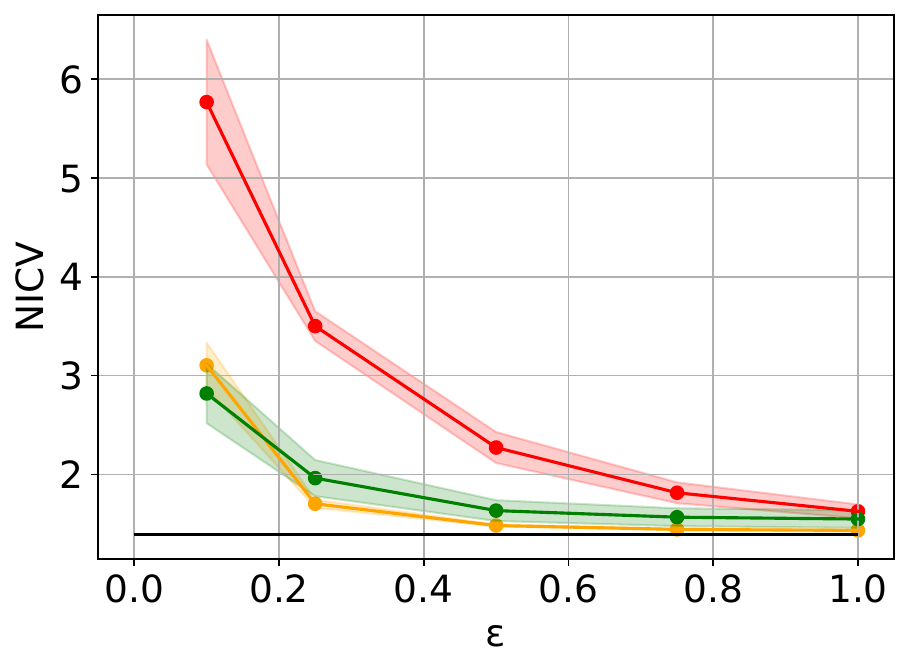}
        \caption{Breast}
        \label{fig:breast}
    \end{subfigure}
    \begin{subfigure}{0.19\textwidth}
        \includegraphics[width=\textwidth]{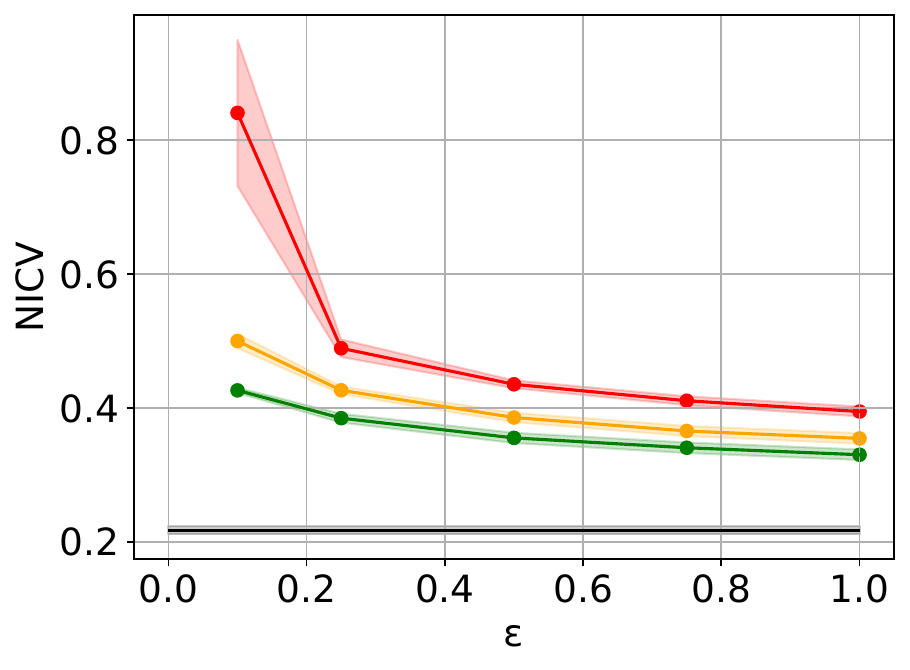}
        \caption{Yeast}
        \label{fig:yeast}
    \end{subfigure}
    \begin{subfigure}{0.19\textwidth}
        \includegraphics[width=\textwidth]{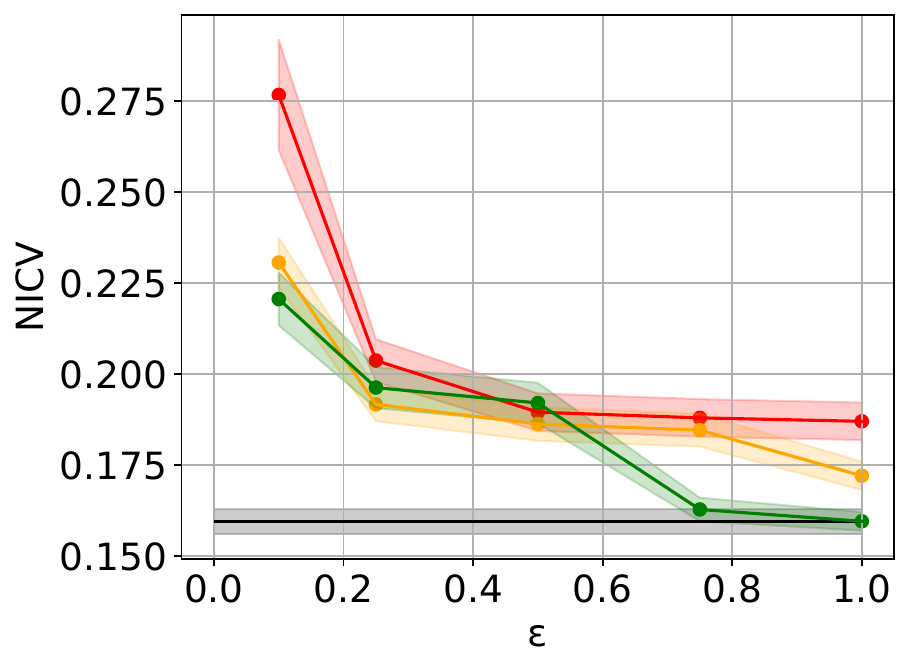}
        \caption{House}
        \label{fig:house}
    \end{subfigure}
    \begin{subfigure}{0.19\textwidth}
        \includegraphics[width=\textwidth]{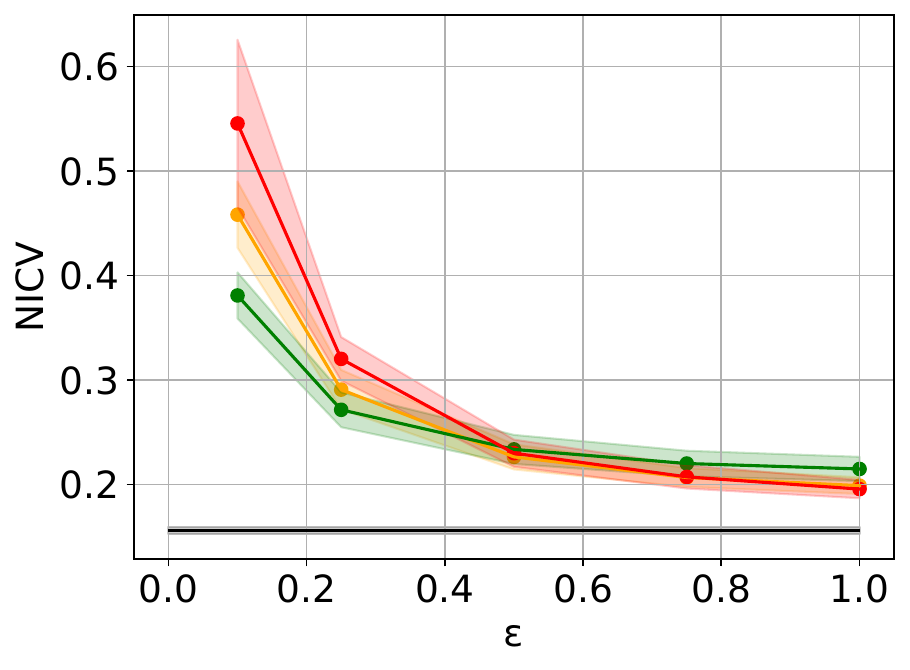}
        \caption{LSun}
        \label{fig:lsun}
    \end{subfigure}
    \begin{subfigure}{0.19\textwidth}
        \includegraphics[width=\textwidth]{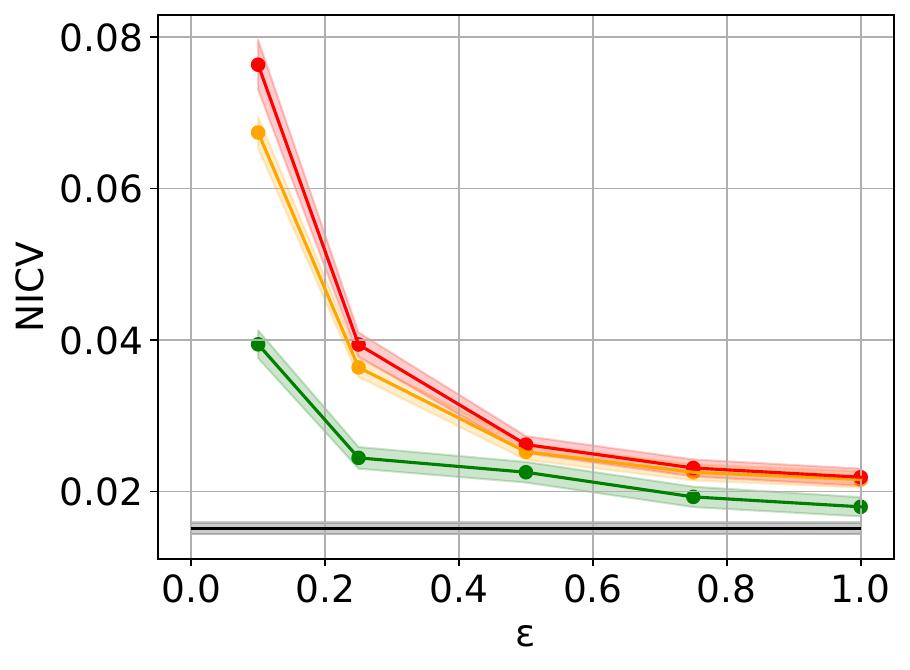}
        \caption{S1}
        \label{fig:s1}
    \end{subfigure}
    \begin{subfigure}{0.19\textwidth}
        \includegraphics[width=\textwidth]{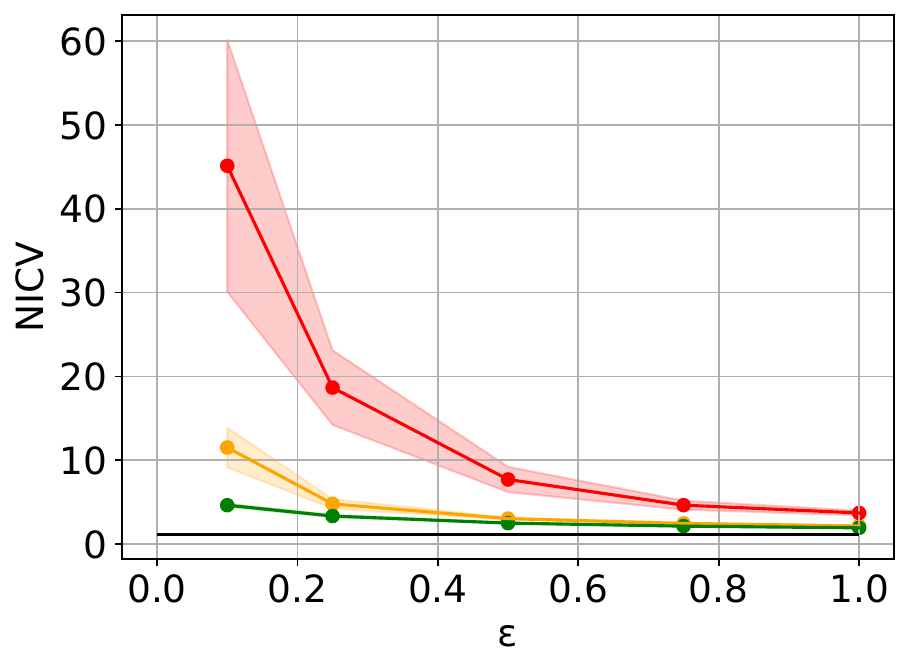}
        \caption{Wine}
        \label{fig:wine}
    \end{subfigure}
    \begin{subfigure}{0.19\textwidth}
        \includegraphics[width=\textwidth]{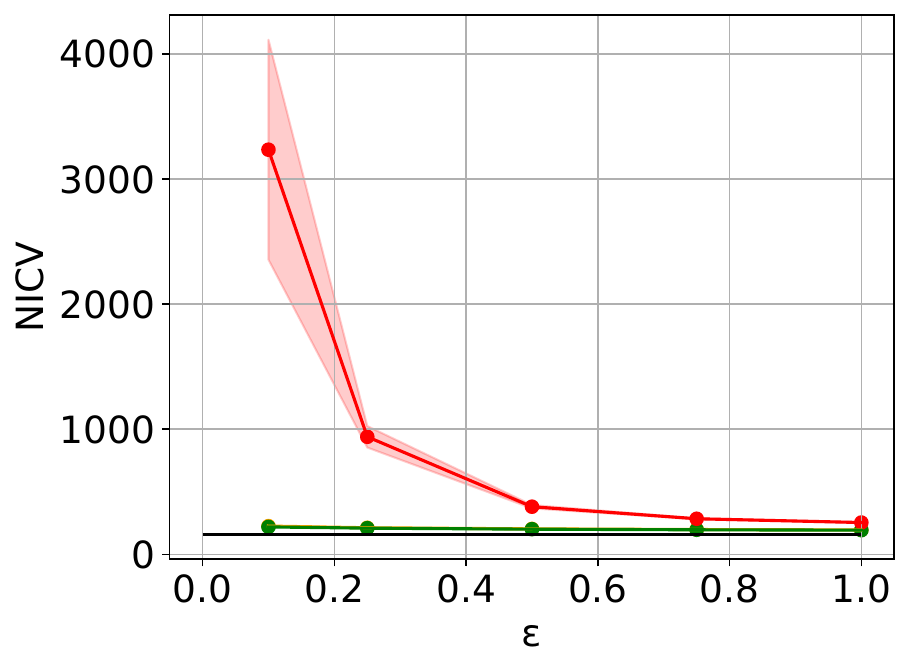}
        \caption{MNIST}
        \label{fig:mnist}
    \end{subfigure}
    \caption{Normalized Intra-cluster Variance (NICV) vs $\epsilon$ for real datasets.}\label{fig:util_compare}
\end{figure*}

\subsection{Experimental Setup}
\subsubsection{Implementation}
Our experimental evaluation was conducted on a Macbook Pro M2 Max (30-core CPU, 38-core GPU, 64GB RAM) using Open MPI~\cite{gabriel04:_open_mpi} for multiparty communication. Following Mohassel et al.~\cite{Mohassel2020}, we evaluate in the LAN setting with simulated network latency (0.25ms per send operation), noting that WAN runtimes can be derived using a linear cost model. Our experimental framework utilizes a two-client setup with balanced dataset partitioning. The MSA protocol (Figure~\ref{fig:MSA_protocol}) operates over the ring $\mathbf{R} = \mathcal{Z}_{2^{32}}$, employing fixed-point representation with precision factor $\Prec=16$. All experiments use randomly partitioned datasets, with results averaged over 100 runs and reported with 95\% confidence intervals where applicable.  For \ourProtocol, we set the radius constraint $\alpha = 0.8$ using the ``Step'' strategy (see Appendix~\ref{app:ablation} for detailed analysis and justification of these parameter choices). We set $\delta=\frac{1}{\DatasetSize \log{\DatasetSize}}$ and report the total $\epsilon$ over all iterations for all experiments. Our implementation is publicly available at \url{https://doi.org/10.5281/zenodo.15530617}.

\subsubsection{Baselines}
We evaluate \ourProtocol~against several baselines. For evaluating utility, we implement three protocols. First, \lloydsProtocol~\cite{Lloyd1982}, a non-private Lloyd's algorithm adapted to our federated framework with sphere packing initialization. Second, \suProtocol~\cite{Su2016}, which adapts Su et al.'s centralized differentially private \km clustering algorithm to our federated setting using our masked secure aggregation (MSA). Third, \gProtocol, a modification of \suProtocol~replacing Laplace noise with Gaussian noise and using composition and privacy budget analysis similar to our work. In Appendix~\ref{app:gaussian}, we provide the details of how the analysis in Section~\ref{sec:dp_params} differs for \gProtocol.  For computational and communication efficiency benchmarking, we compare against \mohProtocol~\cite{Mohassel2020}, established by Hedge et al.~\cite{Hegde2021} as the most efficient secure \km protocol. 

\subsubsection{Datasets}
Our evaluation employs both real and synthetic datasets. We use established datasets from the clustering datasets repository~\cite{ClusteringDatasets}, following Su et al.~\cite{Su2016} and Mohassel et al.~\cite{Mohassel2020} for comparability. For Birch2~\cite{Birchsets}, we take $25,000$ random samples from the $100,000$ sample dataset.

For evaluating scalability, following~\cite{Su2016}, we generate synthetic datasets (\textit{Synth}) using the \texttt{clusterGeneration} R package~\cite{clusterGen}, which enables control over inter-cluster separation (in $ [-1,1]$). The synthetic datasets contain $\DatasetSize = 10,000$ samples across $\NumClusters$ clusters, with cluster sizes following a $1:2:...:\NumClusters$ ratio. We incorporate a random number (in $[0, 100]$) of randomly sampled outliers and set the cluster separation degrees in $[0.16, 0.26]$, spanning partially overlapping to separated clusters. While Su et al.~\cite{Su2016} evaluate configurations up to $\NumClusters=10$ and $\Ddim=10$, we extend the evaluation to $\NumClusters=32$ and $\Ddim=512$ in powers of two, creating $45$ parameter combinations. Each combination generates three datasets with different random seeds. To assess scalability at higher cluster counts, we further extend our evaluation with \textit{Synth-K}, incorporating configurations up to $\NumClusters=128$ for $\Ddim=2$.

For benchmarking performance, we create synthetic datasets (\textit{TimeSynth}) with balanced cluster sizes ($C_{avg} = \frac{\DatasetSize}{\NumClusters}$), varying \DatasetSize{}~(10K-100K), \Ddim{}~(2-5), and \NumClusters{}~(2-5), following Mohassel et al.~\cite{Mohassel2020}.
Table~\ref{tab:dataset_summary} summarizes these datasets. Following standard practice in DP literature~\cite{Blum2005,Mcsherry2009, Dwork2011, Su2016}, all datasets are normalized to $[-1,1]$.

\begin{table}[h]
    \centering
    \caption{Summary of Datasets Used in Evaluation}
    \label{tab:dataset_summary}
    \begin{tabular}{|l|l|l|l|}
    \hline
    \textbf{Dataset} & \textbf{\DatasetSize{}} & \textbf{\Ddim{}} & \textbf{\NumClusters} \\
    \hline
    Iris~\cite{iris_53} & 150 & 4 & 3 \\
    LSun~\cite{ultsch2005lsun} & 400 & 2 & 3 \\
    S1~\cite{S1} & 5000 & 2 & 15 \\ 
    House~\cite{ClusteringDatasets} & 1837 & 3 & 3 \\
    Adult~\cite{misc_adult_2} & 48842 & 6 & 3 \\
    Wine~\cite{wine} & 178 & 13 & 3 \\
    Breast~\cite{breast} & 699 & 9 & 2 \\
    Yeast~\cite{yeast} & 1484 & 8 & 10 \\
    MNIST~\cite{lecun1998mnist} & 10000 & 784 & 10 \\
    Birch2~\cite{Birchsets} & 25000 & 2 & 100 \\
    G2~\cite{G2sets} & 2048 &  2-1024 &  2\\
    \hline
    Synth & 10K & 2-512 & 2-32 \\ 
    Synth-K & 10K & 2 & 2-128 \\ 
    TimeSynth & 10K, 100K & 2,5 & 2,5 \\
    
    \hline
    \end{tabular}
    \end{table}

\begin{figure*}[t]
    \centering
    \begin{subfigure}{\textwidth}
        \centering
        \includegraphics[width=0.5\textwidth]{figs/metrics/legend.pdf}
    \end{subfigure}
    \begin{subfigure}{0.49\textwidth}
        \centering
        \includegraphics[width=\columnwidth]{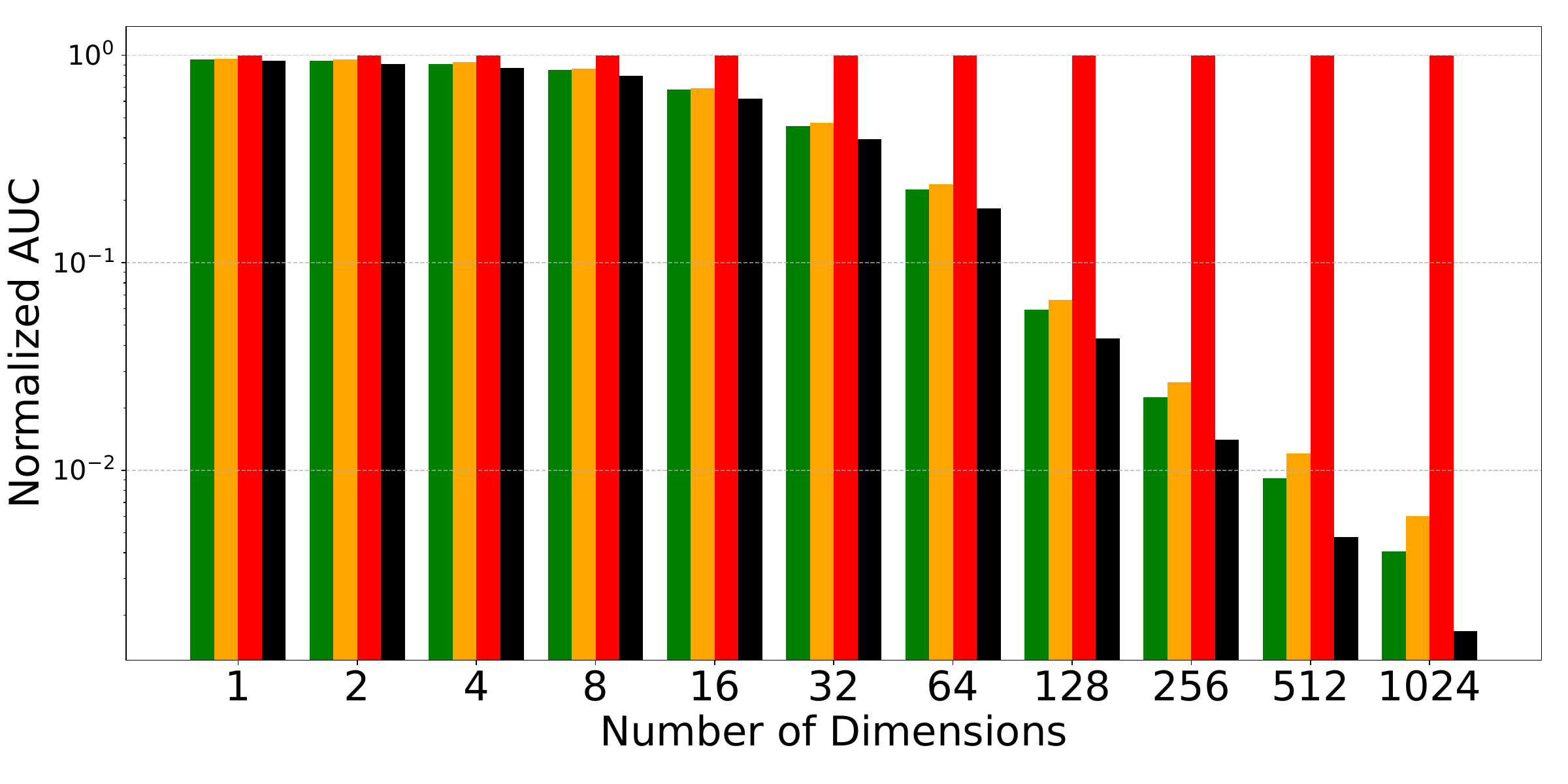}
        \caption{Varying the number of dimensions using G2 ($\NumClusters=2$, log scale).}
        \label{fig:g2_auc}
    \end{subfigure}
\begin{subfigure}{0.49\textwidth}
    \centering
    \includegraphics[width=\columnwidth]{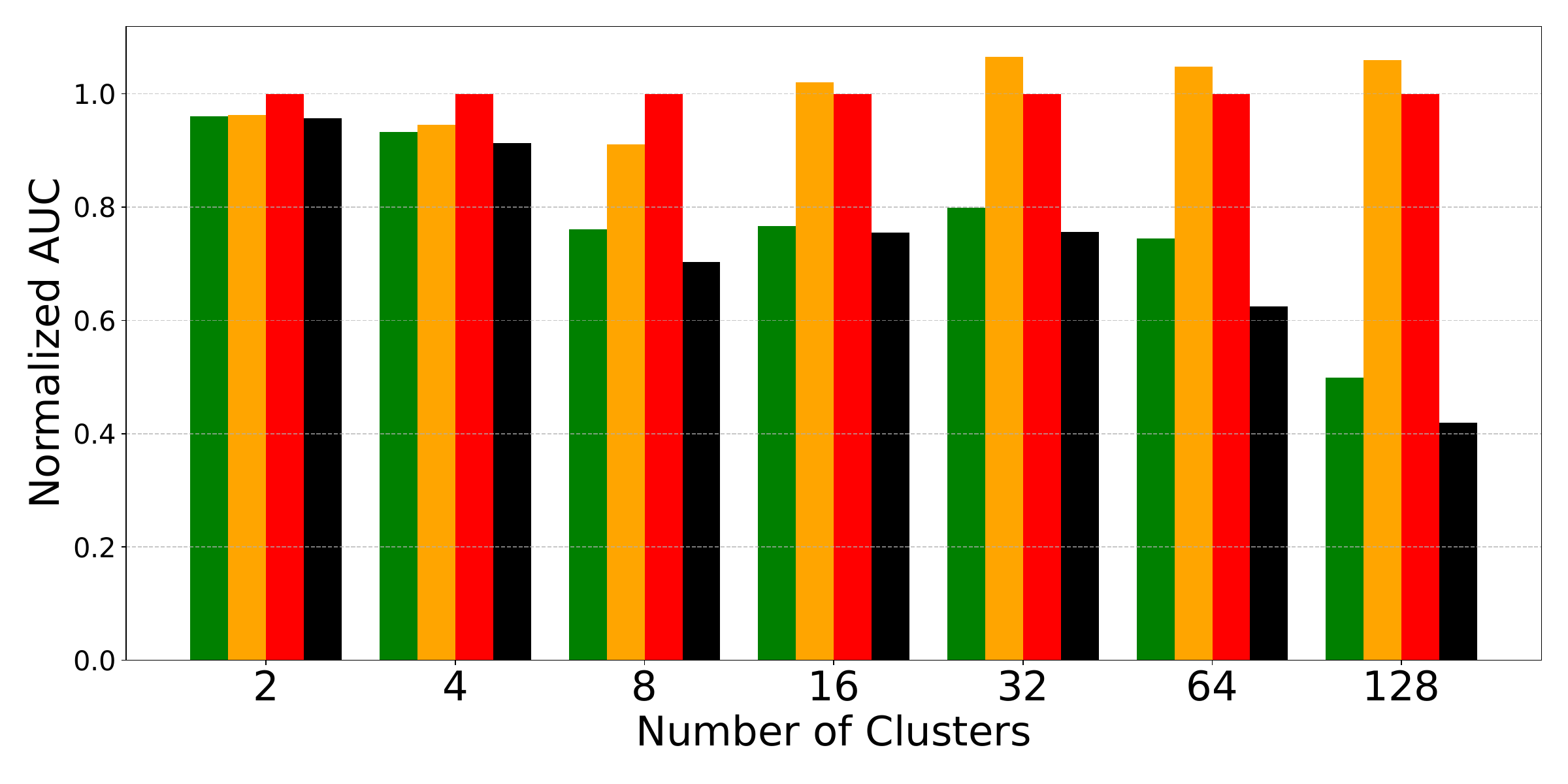}
    \caption{Varying the number of clusters using \textit{Synth-K} ($\Ddim=2$).}
    \label{fig:synth_auc}
\end{subfigure}
\caption{Scaling of AUC over number of dimensions and clusters. }
\label{fig:dimclust_scale}
\end{figure*}
\begin{figure*}[t]
    \centering
    \begin{subfigure}{0.33\textwidth}
        \includegraphics[width=\textwidth]{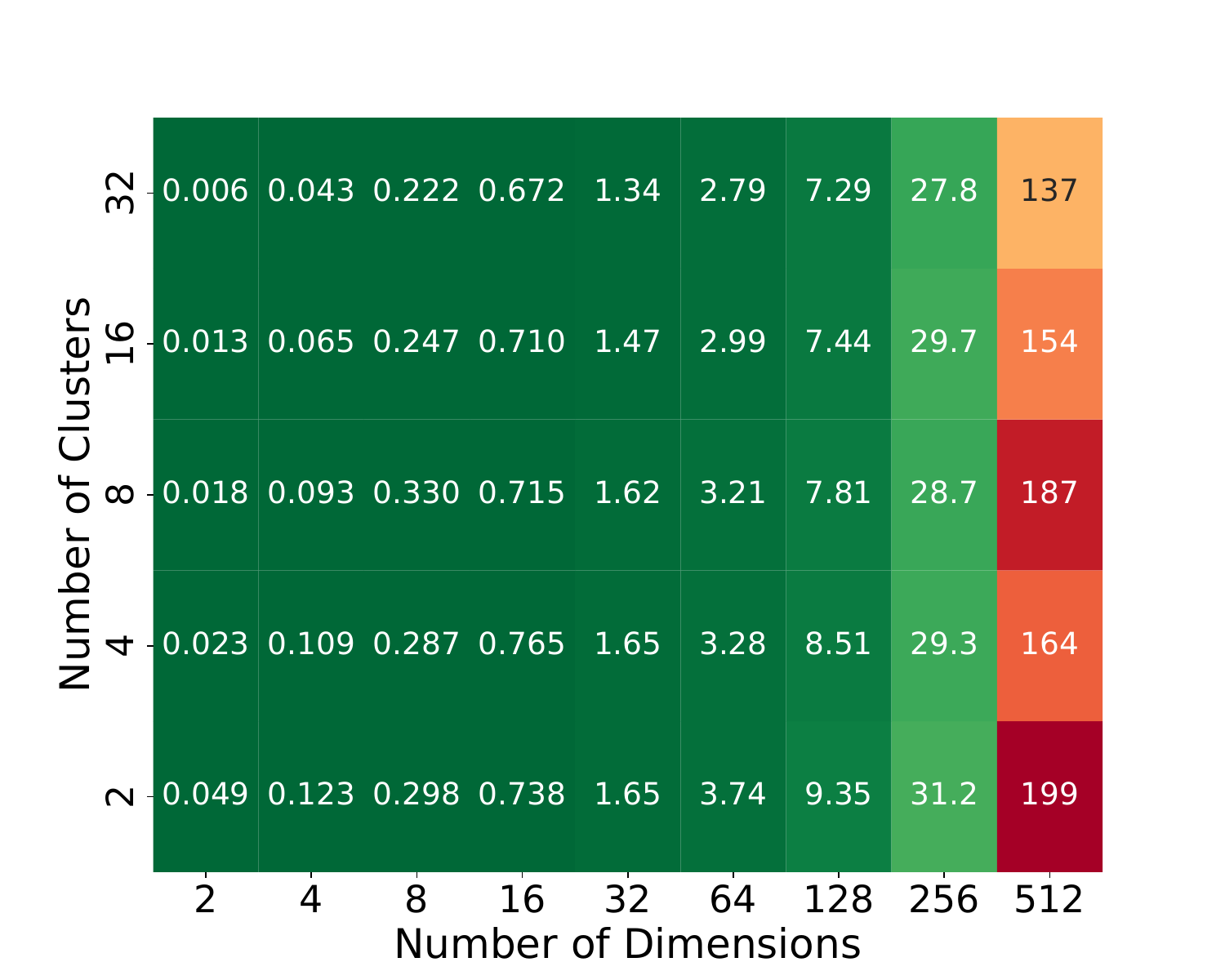}
        \caption{\suProtocol}
        \label{fig:su_heatmap}
    \end{subfigure}
    \begin{subfigure}{0.33\textwidth}
        \includegraphics[width=\textwidth]{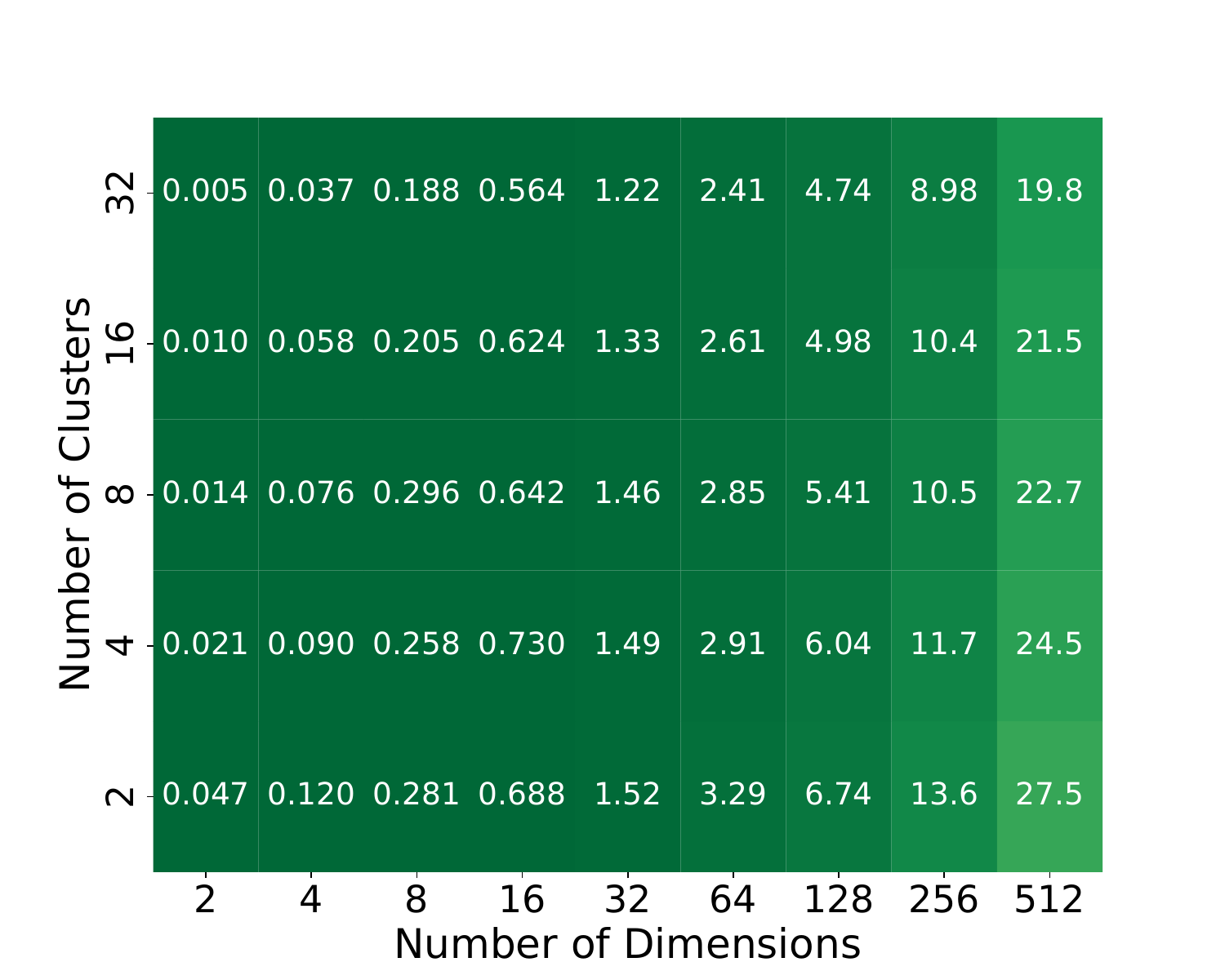}
        \caption{\ourProtocol}
        \label{fig:our_heatmap}
    \end{subfigure}
    \begin{subfigure}{0.33\textwidth}
        \includegraphics[width=\textwidth]{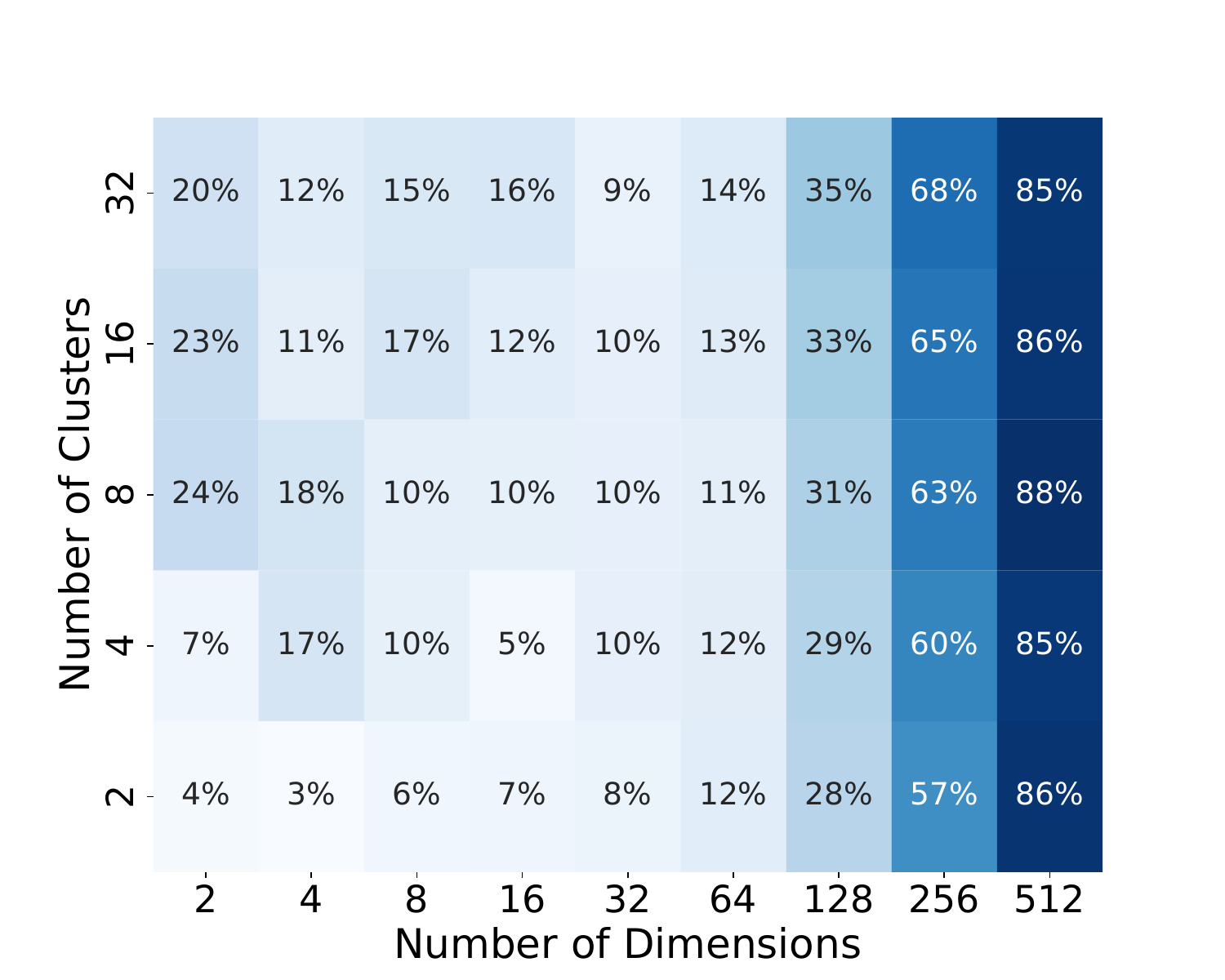}
        \caption{Improvement}
        \label{fig:improvement_heatmap}
    \end{subfigure}
    \caption{AUC comparison of \ourProtocol~vs. \suProtocol~on the \textit{Synth} datasets.}\label{fig:util_scale}
    \end{figure*}

\subsection{Utility Evaluation}\label{sec.utility_eval}

\subsubsection{Metrics}
We use the Normalized Intra-cluster Variance (NICV) from prior work~\cite{Su2016} as our primary utility metric. NICV normalizes the \km objective function (Eqn~\ref*{eqn:wcss_obj}) by dividing it by the dataset size:

\begin{equation*}
    NICV(\ClusterI{}) = \frac{1}{\DatasetSize} \cdot \sum_{\ClusterID=1}^{\NumClusters} \sum_{\datapointi{\DatasetIndex} \in \ClusterI{\ClusterID}} ||\datapointi{\DatasetIndex} - \CenterI{\ClusterID}||^2 
\end{equation*}

To facilitate systematic comparison across methods for the synthetic datasets, we summarize each method's performance over different privacy budgets using the area under the curve (AUC) of NICV values against $\epsilon \in [0.1, 0.25, 0.5, 0.75, 1.0]$, computed via the trapezoidal rule:

\begin{equation}
    AUC = \sum_{i=1}^{n-1} \frac{NICV_i + NICV_{i+1}}{2} \cdot (\epsilon_{i+1} - \epsilon_i)
\end{equation}

\ifthenelse{\boolean{extended}}{
We evaluate additional metrics in Appendix~\ref{sec:appendix_util_eval}.}{
Due to space constraints, we defer the evaluation of additional metrics to the extended version of this paper~\cite{full_version}.}

\subsubsection{Real World Datasets}
Figure~\ref{fig:util_compare} shows the NICV against various privacy budgets on ten real-world datasets to answer \ref{q:util_compare}.
The shading shows the 95\% confidence interval of the mean over the 100 experiments.
We observe that \ourProtocol{}~and \gProtocol{}~consistently outperform \suProtocol{}~across all high-dimensional datasets (e.g., Wine, Breast, Yeast, MNIST). 
This aligns with our theoretical expectations as the Gaussian mechanism composes much more favourably in high-dimensional spaces than the Laplace mechanism. 
Additionally, \ourProtocol{}~outperforms \gProtocol{}~in all cases, especially in datasets with low dimensions and a high number of clusters (e.g., Birch2, S1), where \gProtocol{}~does not outperform \suProtocol{}.
This highlights the effect of imposing a radius constraint on the protocol, which, as discussed in Section~\ref{sec:constrained_k_means}, is most restrictive in low-dimensional, high-cluster settings, further reducing the noise added to the centroids.
While all differentially private methods eventually approach the non-private baseline as privacy budget increases, \ourProtocol~achieves this convergence at substantially lower privacy budgets, establishing it as the superior approach for DP \km clustering across all datasets.

\subsubsection{Synthetic Datasets}
To answer \ref{q:util_scale}, we follow an approach similar to Su et al.~\cite{Su2016} and evaluate \ourProtocol{}~and \suProtocol{}~on our synthetic datasets varying both \NumClusters{}~and \Ddim.
Figure~\ref{fig:util_scale} shows heatmaps of the AUC values for \suProtocol{}~and \ourProtocol{}~on the synthetic datasets.
To clarify the difference between the protocols, Figure \ref{fig:improvement_heatmap} shows the percentage of reduction in AUC of \ourProtocol~over that of \suProtocol.
From the heatmaps, we observe that \ourProtocol{}~outperforms \suProtocol{}~across all datasets, with the improvement being more pronounced in higher dimensions and in higher \NumClusters{}.

To investigate higher limits, we further extend the evaluation to the G2~\cite{G2sets} datasets with $\Ddim$ up to 1024, and \textit{Synth-K} datasets with $\NumClusters$ up to 128.
Figures~\ref{fig:g2_auc} and~\ref{fig:synth_auc} present a comparative analysis of AUC performance across \lloydsProtocol, \ourProtocol{}, \gProtocol{}, and \suProtocol{} protocols, evaluated on G2 and \textit{Synth-K} datasets. The AUC values are normalized relative to \suProtocol{} as the baseline. Figure~\ref{fig:g2_auc} examines dimensionality scaling by varying $\Ddim$ up to 1024 while maintaining $\NumClusters=2$, whereas Figure~\ref{fig:synth_auc} shows cluster scaling by varying $\NumClusters$ up to 128 while fixing $\Ddim=2$.
In Figure~\ref{fig:g2_auc}, we observe that for low dimensions, all protocols perform very closely to the baseline (\lloydsProtocol), but as the dimension increases, the task becomes much more challenging for the private protocols, with \ourProtocol~outperforming all others.
In Figure~\ref{fig:synth_auc}, we observe that \ourProtocol~outperforms all other protocols across all cluster counts, with the improvement being more pronounced at higher cluster counts. This is to be expected as the radius constraint becomes more restrictive with higher cluster counts, leading to a more significant reduction in noise added to the centroids. 

Since Su et al. found \suProtocol{}~outperformed all other approaches for $\Ddim{}>3$~\cite{Su2016}, and we outperform \suProtocol{}, we claim that \ourProtocol{}~is state-of-the-art in DP \km for $\Ddim{}>3$.
\ourProtocol{}~also fixes the scalability issue that Su et al.'s work~\cite{Su2016} observed in high \NumClusters{} and enables us to scale to much larger dimensions.

\begin{table*}
    \centering
        \begin{tabular}{llllllllll}
\toprule
Dataset & \multicolumn{3}{c}{Parameters} & \multicolumn{3}{c}{Runtime (ms)} & \multicolumn{3}{c}{Comm (bytes)}\\
\cmidrule(lr){2-4}\cmidrule(lr){5-7}\cmidrule(lr){8-10}
& \DatasetSize{} & \NumClusters{} &\Ddim{}& \ourProtocol{} & \mohProtocol~\cite{Mohassel2020} & Speedup & \ourProtocol{} & \mohProtocol~\cite{Mohassel2020} & Reduction\\
\midrule
TimeSynth & 10K & 2 & 2 &   $3.03 \pm 0.02$ &   10500 &     $3465\times$ &  192& 2.557e8 &$1.33e6\times$ \\
& & 2 & 5 &                 $3.24 \pm 0.01$ &   - &         - &             384& - & - \\
& & 5 & 2 &                 $4.32 \pm 0.02$ &   34050 &     $7882\times$ &  480& 9.746e8& $2.03e6\times$ \\
& & 5 & 5 &                 $4.95 \pm 0.04$ &   - &         - &             960& - & - \\
TimeSynth & 100K & 2 & 2 &  $12.57 \pm 0.02$ &  105120 &    $8363\times$ &  192& 2.467e9 & $1.28e7\times$ \\
& & 2 & 5 &                 $13.17 \pm 0.03$ &  - &         - &             384& - & - \\
& & 5 & 2 &                 $22.05 \pm 0.03$  & 347250 &    $15748\times$ & 480& 9.535e9 & $1.99e7\times$ \\
& & 5 & 5 &                 $22.56 \pm 0.02$ &  - &         - &             960& - & - \\
\midrule
LSun & 400 & 3 & 2 &        $2.6 \pm 0.04$ &    1481 &      $570\times$ &   288& - & - \\
S1 & 5K & 15 & 2 &          $5.81 \pm 0.08$  &  49087 &     $8449\times$ &  1440& - & - \\
\bottomrule
\end{tabular}

        \caption{Overhead comparison per iteration against \mohProtocol~\cite{Mohassel2020} for two clients ($100$ runs with mean and $95\%$ confidence reported for \ourProtocol)}\label{tab:overhead}
\end{table*}
\subsection{Runtime Evaluation}\label{sec.runtime_eval}
To answer questions \ref{q:time_compare} and \ref{q:time_scale}, we compare \ourProtocol~with the \mohProtocol~protocol of Mohassel et al.~\cite{Mohassel2020} on a variety of common datasets in terms of runtime and communication size (per iteration).
Table \ref*{tab:overhead} shows the summary of the evaluation with two clients (the setting considered in Mohassel et al.'s work~\cite{Mohassel2020}).
Values for \mohProtocol~are taken as reported in the paper by Mohassel et al.~\cite{Mohassel2020}, noting that the setup is similar to the one used in our evaluation, and the gap in performance is more than what could be accounted for by different setups.
In terms of runtime, \mohProtocol~executes in the order of minutes, while \ourProtocol~executes in milliseconds, offering five orders of magnitude speedup.
In terms of communication size, \mohProtocol~requires communicating gigabytes of data per iteration, while \ourProtocol~needs a fraction of a kilobyte, offering up to seven orders of magnitude reduction in size.
In terms of communication rounds, \mohProtocol~requires $\Theta (\lceil \log{\NumClusters{}}\rceil)$ communication rounds per iteration~\cite{Hegde2021}, while \ourProtocol~only requires one.
Since \mohProtocol~was found to be the state-of-the-art in secure federated \km~\cite{Hegde2021}, we conclude that we advance the state-of-the-art while offering output privacy (which would only further slow down Mohassel et al.~\cite{Mohassel2020}).

\section{Conclusion}
In this work, we designed \ourProtocol, a new private protocol for federated \km.
We have shown that \ourProtocol~is secure in the computational model of DP and analyzed its utility.
Compared to state-of-the-art solutions in the central model of DP, \ourProtocol~results in higher utility across a wide range of real datasets and scales effectively to larger dimensions and number of clusters.
\ourProtocol~also achieves five orders of magnitude faster runtime than the state-of-the-art in secure federated \km~across a variety of problem sizes.
In summary, we provide an efficient, private, and accurate solution to the horizontally federated \km problem.

\section*{Acknowledgments}

We gratefully acknowledge the support of NSERC for grants RGPIN-2023-03244, IRC-537591, the Government of Ontario and the Royal Bank of Canada for funding this research.

\section*{Open Science}

We make all source code and datasets used in our paper available here: \url{https://doi.org/10.5281/zenodo.15530617}. 
This repository includes all relevant code and supporting scripts necessary to reproduce our experiments and results.

\section*{Ethics Considerations}
Our work develops a new protocol for privately clustering data.
Our primary stakeholders are the data scientists who will deploy our protocol.
The secondary stakeholders are the subjects of the analyses whose data is aggregated and analyzed in this protocol.
We improve the accuracy, privacy, and runtime of existing protocols in the private clustering domain.
Simultaneous improvements in all three categories have positive ethical implications for both stakeholders.
Improved privacy protection (input and output privacy) reduces the risks for both stakeholders.
Our accuracy improvements make the output more useful to the primary stakeholders.
Finally, our runtime improvements reduce the computational cost and, thus, the environmental impact of conducting the analysis.
All of these improvements incentivize the use of private protocols over non-private alternatives.
While our improvements significantly reduce the risks compared to related work, care must be taken to appropriately communicate the inherent risks of deploying any protocol that satisfies a similar security model to both stakeholders. 

\bibliographystyle{plain}
\bibliography{refs}

\appendix
\section{Proofs}
\subsection{Proof of Theorem~\ref{thm:maxdist_sens}}\label{app:proof_maxdist_sens}
\RadiusSensitivity*
\begin{proof}
    w.l.o.g assume that the datasets differ by a single point $\datapointi{}'$ so that $\Dataset' = \Dataset \cup \{\datapointi{}'\}$.
    If $\datapointi{}'$ is not within \MXD~of $\NoiseCenterI{\ClusterID}^{(\IT-1)}$, it will not be assigned to $\ClusterI{\ClusterID}$ and thus $\Diff{\ClusterID}(D) = \Diff{\ClusterID}(D')$.
    Therefore, we only need to consider the case where $\datapointi{}'$ is at most \MXD~from $\NoiseCenterI{\ClusterID}^{(\IT-1)}$.
    By definition, we have:
    \begin{eqnarray*}
        \Delta^{\Diff{}} &=& \max\limits_{\Dataset, \Dataset' \in \mathcal{D}} ||\Sm{\ClusterID}^{(\IT)} - \Cnt{\ClusterID}^{(\IT)}\NoiseCenterI{\ClusterID}^{(\IT-1)} - (\Sm{\ClusterID}'^{(\IT)} - \Cnt'{\ClusterID}^{(\IT)}\NoiseCenterI{\ClusterID}^{(\IT-1)})||_2 \\
        &=& \max\limits_{\Dataset, \Dataset' \in \mathcal{D}} ||\Sm{\ClusterID}^{(\IT)} - \Cnt{\ClusterID}^{(\IT)} \NoiseCenterI{\ClusterID}^{(\IT-1)} - \Sm{\ClusterID}^{(\IT)} - \datapointi{}' + (\Cnt{\ClusterID}^{(\IT)} +1) \NoiseCenterI{\ClusterID}^{(\IT-1)}||_2\\
        &=& \max\limits_{\Dataset, \Dataset' \in \mathcal{D}} || \NoiseCenterI{\ClusterID}^{(\IT-1)} - \datapointi{}'||_2\\
        &\leq& \MXD
    \end{eqnarray*}
    where the last line follows by (Eqn \ref{eq:maxdist_constraints}).
\end{proof}
\subsection{Proof of Theorem~\ref{thm:sec_proof}}\label{app:proof_sec_proof}
We first prove the following helpful lemma.
\begin{lemma}\label{lem:GDP_analysis}
    Our noise mechanism (defined in Eqn~\ref{eq:sum_noise}) of $f(D) + \DPNoise$ where $\DPNoise \sim \mathcal{N}(0,\sigma^2T(\Delta^{(f)})^2)$ applied over $T$ adaptive iterations of $f$ is $(\frac{1}{\sigma})$-GDP, where $\Delta^{(f)}$ is the sensitivity of $f$.
\end{lemma}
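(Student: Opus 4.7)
\medskip

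\noindent\textbf{Proof plan for Lemma~\ref{lem:GDP_analysis}.}
The plan is to prove this by reducing a single iteration to Theorem~\ref{thm:g_mech} (the Gaussian Mechanism as a GDP statement), and then invoking Theorem~\ref{thm:gdp_comp} for the $T$-fold composition. This is the natural decomposition because the noise variance in Eqn~\ref{eq:sum_noise} has been deliberately inflated by the factor $T$ so that, per iteration, the mechanism is $\frac{1}{\sigma\sqrt{T}}$-GDP, which composes across $T$ rounds into exactly $\frac{1}{\sigma}$-GDP.

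First I would fix a single iteration and write $M_t(D) = f(D) + \DPNoise_t$ with $\DPNoise_t \sim \mathcal{N}\!\bigl(0,\,\sigma^2 T (\Delta^{(f)})^2\bigr)$. To apply Theorem~\ref{thm:g_mech}, I need the noise variance to be of the form $(\Delta^{(f)})^2/\theta^2$. Matching $(\Delta^{(f)})^2/\theta^2 = \sigma^2 T (\Delta^{(f)})^2$ yields $\theta = \frac{1}{\sigma\sqrt{T}}$. Hence, by Theorem~\ref{thm:g_mech}, each $M_t$ is $\frac{1}{\sigma\sqrt{T}}$-GDP.

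Second, I would apply Theorem~\ref{thm:gdp_comp} to the $T$-fold adaptive composition $M = (M_1,\dots,M_T)$, where each $M_t$ may depend on the previous outputs. Summing $T$ copies of $\theta_t^2 = \frac{1}{\sigma^2 T}$ gives
\begin{equation*}
\sqrt{\sum_{t=1}^{T} \theta_t^2} \;=\; \sqrt{T \cdot \tfrac{1}{\sigma^2 T}} \;=\; \tfrac{1}{\sigma},
\end{equation*}
so $M$ is $\frac{1}{\sigma}$-GDP, as required.

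There is no real obstacle here; the main thing to be careful about is that Theorem~\ref{thm:gdp_comp} applies to \emph{adaptive} composition (which is exactly the setting of Lloyd's iterations, where the sensitivity bound $\Delta^{(f)}$ from Theorem~\ref{thm:maxdist_sens} holds per iteration regardless of the choice of the previous noisy centroids), and that the same sensitivity bound $\Delta^{(f)}$ governs every round so the per-round GDP parameter is uniform. These are both immediate from Theorem~\ref{thm:maxdist_sens} and the construction in Section~\ref{sec:constrained_k_means}, so the proof reduces to the two-line calculation above.
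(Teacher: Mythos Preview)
Your proposal is correct and matches the paper's proof essentially line for line: the paper also invokes Theorem~\ref{thm:g_mech} to obtain the per-iteration GDP parameter and then Theorem~\ref{thm:gdp_comp} for the $T$-fold adaptive composition, arriving at $\frac{1}{\sigma}$-GDP. The only cosmetic difference is that the paper phrases it as ``start with noise multiplier $\sigma$, compose to get $\frac{\Delta^{(f)}\sqrt{T}}{\sigma}$-GDP, then scale $\sigma$ by $\sqrt{T}\Delta^{(f)}$,'' whereas you start directly from the already-inflated variance in Eqn~\ref{eq:sum_noise} and solve for $\theta$; the arithmetic is identical.
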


\begin{proof}
    After choosing $\sigma$, by Theorem~\ref{thm:g_mech} from Dong et al.~\cite[Theorem 1]{dong22_gdp}, we have that each application of a Gaussian mechanism with noise multiplier $\sigma$ is $\frac{\Delta^{(f)}}{\sigma}$-GDP.
    We apply the Gaussian mechanism adaptively over $T$ iterations. Which by Theorem~\ref{thm:gdp_comp} from Dong et al.~\cite[Corollary 2]{dong22_gdp}, gives us $\frac{\Delta^{(f)}\sqrt{T}}{\sigma}$-GDP.
    Thus, by multiplying $\sigma$ by $\sqrt{T}\Delta^{(f)}$, we get $\frac{1}{\sigma}$-GDP.
\end{proof}
We now restate and prove Theorem~\ref{thm:sec_proof}.
\SecProof*
\begin{proof}
    To prove the algorithm satisfies Definition~\ref{def:sec_model}, we need to consider the view from each party.
    We begin with the view of the server.
    First, let us assume that each client samples the random mask $\MM{\PartyIndex}$ from a random oracle over $\mathbf{R}$.
    Under this assumption, the $Enc(\cdot)$ function satisfies information-theoretic security as it is a one-time pad using $\MM{\PartyIndex}$ as the pad.
    In practice, we implement the sampling of $\MM{\PartyIndex}$ using a PRNG and thus reduce from information-theoretic to computational security with a negligible term $negl(\lambda)$.
    Therefore, the view of the server satisfies $\epsilon(\lambda)$-IND-CDP-MPC with $\epsilon=0$.
    
    Regarding the clients, a client's view consists of their own dataset $D_A$, and the output from each iteration of the protocol.
    We note that the initialization is data-independent and thus is indistinguishable between neighbouring datasets.
    Each iteration consists of the assignment and updating of the cluster centroids.
    The assignment step uses the published centroids from the previous iteration (post-processing) to divide the dataset into clusters. We can then apply parallel composition over each of the clusters.
    Thus, we can focus on the privacy cost of a single cluster for the remainder of the proof.
    
    First, we set $\sigma$ using Theorem~\ref{thm:GDPtoDP} from Dong et al.~\cite[Corollary 1]{dong22_gdp}.
    We choose the minimum $\sigma$ such that $(\epsilon,\delta)$-DP iff $\frac{1}{\sigma}$-GDP by Theorem~\ref{thm:GDPtoDP}.
    This is equivalent to finding the noise multiplier for a sensitivity one, single application, of the Gaussian mechanism with noise multiplier $\sigma$.
    To solve this minimization, we use Algorithm 1 from Balle and Wang~\cite{balle18_gaussian}.

    The updating of the cluster centroid applies the Gaussian mechanism twice, once to the relative sums and once to the counts.
    We split the overall $\sigma$ into $\sigma^\Diff{}$ and $\sigma^\Cnt{}$ following Eqn~\ref{eq:dp_params}.
    We begin with the analysis of the relative sum.
    In Theorem~\ref{thm:maxdist_sens}, we show that the sensitivity of the relative sum is $\MXD$.
    Applying Lemma~\ref{lem:GDP_analysis} we get that the relative sum over $T$ iterations is $\frac{1}{\sigma^\Diff{}}$-GDP.
    The sensitivity of the count is $1$, and thus the count is $\frac{1}{\sigma^\Cnt{}}$-GDP by similar analysis.
    Finally, applying Theorem~\ref{thm:gdp_comp}. We get that 
    \begin{equation}
        \sqrt{\left(\frac{1}{\sigma^\Diff{}}\right)^2 + \left(\frac{1}{\sigma^\Cnt{}}\right)^2} = \frac{1}{\sigma}
    \end{equation}
    and thus the we get $\frac{1}{\sigma}$-GDP over the entire protocol which implies $(\epsilon, \delta)$-DP (because of how we chose $\sigma$).
    Applying parallel composition over all clusters, the client's view is $(\epsilon,\delta)$-IND-CDP-MPC, with no computational assumption (as we use the information-theoretic DP properties of differential privacy). 
    Taking the worst case over the clients and the server, the result follows.
    \end{proof}
\subsection{Proof of Quantized DP Noise}\label{app:proof_quant}
We argue why quantizing the DP noise is acceptable in Figure~\ref{fig:MSA_protocol}. In essence, because the sensitive data is already quantized, adding quantized noise to it is equivalent to adding the noise first and then performing the quantization, which aligns with the post-processing lemma in differential privacy.
    
This can be intuitively understood by noting that:
    \begin{itemize}
        \item The data before noise addition is already quantized.
        \item The sum of a quantized value and a non-quantized value will only have information of the non-quantized variable in the less significant bits (i.e., those lost to quantization).
        \item Hence, quantizing the sum is equivalent to dropping this lower bit information, which is similar to quantizing the second variable prior to addition.
\end{itemize}
We show this formally in the following Theorem.
\begin{theorem}[Quantization and Differential Privacy]
    Quantizing Laplace noise at the same level of quantization as that used in the Masked Secure Aggregation (MSA) protocol does not violate the privacy guarantees offered by differential privacy.
    \end{theorem}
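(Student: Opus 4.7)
The plan is to identify the actual protocol output $\tilde{V}(D) + \tilde{\DPNoise}$ as a deterministic post-processing of an idealized continuous-noise mechanism applied to the already-quantized integer-valued data $\tilde{V}(D)$, and then invoke the post-processing lemma.

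First, I would establish the key arithmetic identity that rounding is translation-invariant under integer shifts: for any integer $n$ and real $r$, we have $\round{n+r} = n + \round{r}$. This is immediate from the definition of $\round{\cdot}$ and the fact that adding an integer leaves the fractional part unchanged. The crucial point is that $\tilde{V}(D)$, being a sum of quantized client contributions $\fixedVV{\PartyIndex} = \round{\VV{\PartyIndex} \times \SF}$, is itself an integer (or an integer-valued vector, coordinate-wise).

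Second, I would introduce the idealized mechanism $M^\star(D) := \tilde{V}(D) + \SF \cdot \DPNoise$, where $\DPNoise$ is continuous Laplace noise whose scale is chosen so that $V(D) + \DPNoise$ satisfies $(\epsilon,0)$-DP for the un-quantized sensitivity $\Delta$. Because scaling both the function and the noise by the same constant $\SF$ preserves the noise-to-sensitivity ratio, $M^\star$ inherits the $(\epsilon,0)$-DP guarantee (the sensitivity of $\tilde{V}(\cdot)$ as a function of the input dataset is at most $\SF \cdot \Delta$ up to a small rounding overhead, which I would absorb into the noise calibration or show to be zero because each client's quantization is data-independent once the input domain is fixed).

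Third, I would apply the arithmetic identity to the idealized mechanism to show
\begin{equation*}
\round{M^\star(D)} = \round{\tilde{V}(D) + \SF \cdot \DPNoise} = \tilde{V}(D) + \round{\SF \cdot \DPNoise} = \tilde{V}(D) + \tilde{\DPNoise},
\end{equation*}
which is exactly the value released by the MSA server (pre-unmasking). Since $\round{\cdot}$ is a deterministic function that does not depend on $D$, the post-processing lemma yields that $\tilde{V}(D) + \tilde{\DPNoise}$ satisfies the same $(\epsilon,0)$-DP guarantee as $M^\star$, and hence the same guarantee as the original continuous mechanism $V(D) + \DPNoise$.

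The main obstacle I anticipate is bookkeeping around the sensitivity of the quantized statistic: in principle $\tilde{V}(D) - \tilde{V}(D')$ could exceed $\SF \cdot \Delta$ by a constant due to client-side rounding. I would address this by noting that for the paper's choice $\Prec = 16$ the rounding overhead is vanishingly small relative to the noise scale, and more formally by treating the quantization of inputs as data-independent pre-processing whose effect on sensitivity is a deterministic offset bounded by the number of client coordinates, which can be folded into the calibration of $\sigma$ without affecting the privacy parameter.
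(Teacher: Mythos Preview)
Your proposal is correct and takes essentially the same approach as the paper: both reduce to the identity $\round{n+r}=n+\round{r}$ for integer $n$, which lets you rewrite the released value $\tilde V+\tilde\gamma$ as $\round{\tilde V+2^{q}\gamma}$, i.e.\ as deterministic post-processing of a continuous-noise mechanism applied to the already integer-valued quantized data. Your extra paragraph on sensitivity bookkeeping for the quantized statistic is more careful than the paper, which simply declares $\bar V$ to be the sensitive quantity and implicitly takes the noise as calibrated to it.
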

    \begin{proof}
    We begin by observing that any value, say $\VV{}$, can be split into two parts upon quantization: the integral part, $\Bar{\VV{}}$, and the fractional part, $\hat{\VV{}}$, such that $\Bar{\VV{}} \times \SF$ is the (rounded) integral part of $\VV{} \times \SF$ (i.e. it is $\round{\VV{} \times \SF}$), and $\hat{\VV{}}$ denotes the fractional part that gets lost due to quantization. This notation can similarly be applied to the noise variable, $\eta$. 
    
    In our context, the sensitive data we wish to protect is $\Bar{\VV{}}$ since the entire protocol operates in a quantized environment, i.e., $\VV{}$ is never transmitted. We aim to demonstrate the following equivalence:
    \[  \tilde{\VV{}}+\tilde{\eta} = \round{\VV{} \times \SF}+\round{\eta \times \SF} =  \round{(\Bar{\VV{}}+\eta) \times \SF} \]
    
    The right-hand side represents quantization applied as post-processing to the sum of the sensitive data and the noise, which adheres to the rules of differential privacy. 
    
    After expanding the right-hand side, we obtain:
    \[\round{(\Bar{\VV{}}+\eta) \times \SF}=\round{\Bar{\VV{}}\times \SF+ \Bar{\eta} \times \SF+ \hat{\eta} \times \SF}\]
    
    We can take out $\Bar{\VV{}}\times \SF$ and $\Bar{\eta{}}\times \SF$ from the rounding operator because they are exact integers (indeed $\Bar{\VV{}}\times \SF$ is $\round{\VV{} \times \SF}$). We then obtain:
    \[\round{(\Bar{\VV{}}+\eta) \times \SF}=\round{\VV{} \times \SF}+\round{\eta{} \times \SF}+\round{\hat{\eta} \times \SF}\]
    
    However, $\hat{\eta} \times \SF$ will be lost due to quantization, hence:
    \[\round{(\Bar{\VV{}}+\eta) \times \SF}=\round{\VV{} \times \SF}+\round{\eta{} \times \SF}=\tilde{\VV{}}+\tilde{\eta}\]
    \end{proof}

\section{Ablation of Parameters}\label{app:ablation}

\begin{figure}
    \centering
    \includegraphics[width=\columnwidth]{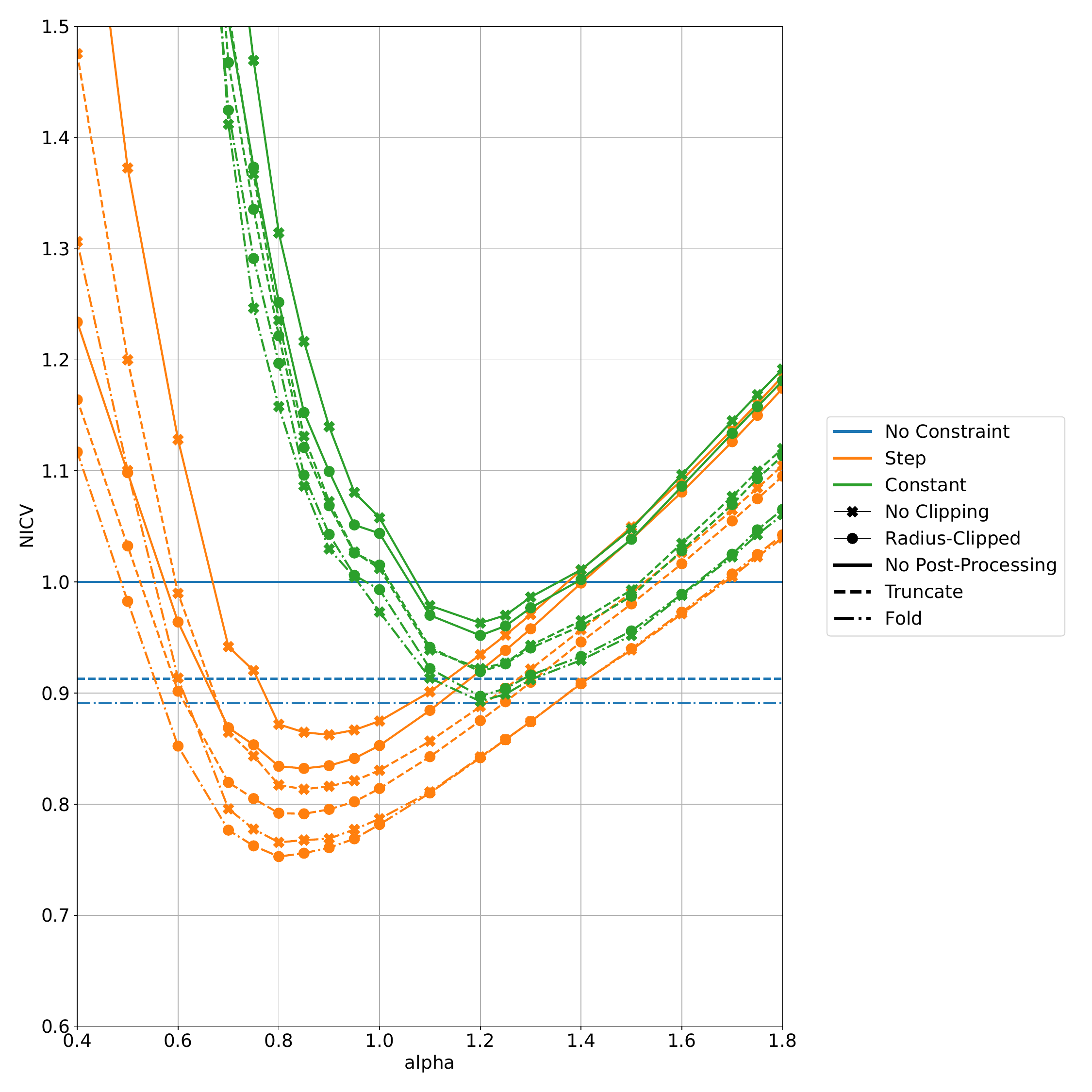}
    \caption{Ablation of $\alpha$ and post-processing strategy on NICV for $\epsilon=0.1$.}
    \label{fig:ablation}
\end{figure}
In this section, we empirically justify our choice of hyperparameter $\alpha$ and the post-processing strategy.
We choose these parameters to minimize the Normalized Intra-Cluster Variance (NICV) metric over a set of synthetic datasets.
The synthetic datasets were generated using the \texttt{clusterGeneration} package~\cite{clusterGen} varying the number of dimensions ($\Ddim$) and clusters ($\NumClusters$) as $(\NumClusters, \Ddim) \in \{2, 4, 8, 16\}^2$, resulting in 16 variations. 
For each variation, we generate three different datasets with the degrees of separation $\{0.25, 0.5, 0.75\}$. 
The average cluster size is fixed to $C_{avg} = \frac{2048}{\NumClusters}$ and the cluster sizes are randomly sampled in the range $[0.70 \cdot C_{avg}, 1.30 \cdot C_{avg}]$, leading to a total dataset size in the range $[1433, 2662]$.

All the ablation experiments were done on $\epsilon=0.1$.
The results are shown in Figure~\ref{fig:ablation}, where we plot the average NICV over all datasets for different values of $\alpha$ and post-processing strategies.
The average is then divided by the NICV of the naive baseline of no post-processing and no radius constraint.
We observe that the ``folding'' post-processing strategy consistently outperforms the ``truncation" and ``none'' strategies.
We also observe that the ``Step'' strategy outperforms the ``Constant'' strategy and provides most performance improvement for $\alpha \approx 0.8$.
We also note that ``radius clipping'' provides a consistent improvement.
Based on these results, we choose $\alpha=0.8$ and the ``Step'' constraint strategy with a post-processing strategy of ``folding" after ``radius clipping'' is applied.

\subsection{Error Analysis of Simple Gaussian Mechanism}\label{app:gaussian}

We follow a similar analysis to Section~\ref{sec:dp_params} to derive the parameters used for \gProtocol.
Since we do not modify the algorithm (only change the noise distribution), we can use Su et al.'s MSE analysis unchanged:
\begin{equation}
    \mse{\NoiseCenterI{}^{(\IT)}} \approx \frac{\NumClusters^3}{\DatasetSize^2}\left(\Var{\DPNoise^{\Sm{}}} + 4 \rho^2\Var{\DPNoise^{\Cnt{}}}\right)
\end{equation}

Substituting the variance following our noise mechanism (with the domain-based sensitivity) gives:
\begin{equation}
    \mse{\NoiseCenterI{}^{(\IT)}} \approx \frac{\NumClusters^3}{\DatasetSize^2}\left(\Ddim\NumIter(\sigma^{\Sm{}})^2 + 4 \rho^2\NumIter(\sigma^{\Cnt{}})^2\right).
\end{equation}
Minimizing this using Lagrange multipliers such that: 
\begin{equation}
    \sqrt{\left(\frac{1}{\sigma^\Diff{}}\right)^2 + \left(\frac{1}{\sigma^\Cnt{}}\right)^2} = \frac{1}{\sigma}
\end{equation}
gives the following ratio of noise multipliers:
\begin{equation}
    \sigma^\Cnt{} = \sqrt{\frac{\sqrt{\Ddim}}{2\rho}}\sigma^\Sm{}
\end{equation}
Substituting this back into the MSE equation gives:
\begin{equation}
    \mse{\NoiseCenterI{}^{(\IT)}} \approx \frac{\NumClusters^3 \Ddim \NumIter \sigma^2 (2\rho + \sqrt{\Ddim})^2}{\DatasetSize^2}
\end{equation}
Setting the per iteration MSE to be less than $0.004$ and rearranging for $\NumIter$ gives:
\begin{equation}
    \NumIter \leq \frac{\DatasetSize^2(0.004)}{\NumClusters^3 \Ddim \NumIter \sigma^2 (2\rho + \sqrt{\Ddim})^2}
\end{equation}
which we also truncate to be in $[2,7]$.

\section{Extensions}\label{sec:extensions}
\subsection{Serverless MSA}
Masked Secure Aggregation (MSA) is at the core of our protocol and requires a semi-honest server to perform the aggregation.
An alternative is for all or a subset of the clients to take the role of a computation node in an MPC protocol.
In this case, the centroids would be secret shared and aggregated in MPC, then DP noise would need to be added before release every iteration.
The noise addition can be implemented by each computation node sampling noise locally and adding it to their shares.
Depending on how many nodes are assumed to be honest, this will necessarily add more noise than \ourProtocol~to ensure privacy in the presence of one or more colluding nodes.
Alternatively, the computation nodes can jointly sample the noise using the protocol of Wu et al.~\cite{wu16}.
The disadvantage of this approach is that it will add a large computational overhead.
However, it would still be significantly faster than the strawman solution of applying an end-to-end MPC protocol such as Mohassel et al.~\cite{Mohassel2020} which would also need to sample noise this way (in addition to their already high computation cost).

\subsection{Local and Shuffle DP}

Another way to alleviate the need for a semi-honest server is to switch to a local or shuffle DP model. 
However, as discussed in Section~\ref{sec:model_comparison}, this is necessarily less accurate than the IND-CDP-MPC model that attains central DP accuracy (even when $\DatasetSize_{\PartyIndex}=1$).
We note that extending \ourProtocol~to the local or shuffle DP model is not advantageous, since the data is already noised in these models, secure computations are not required.
Thus, improvements in the local and shuffle DP models are a tangential research direction and not the focus of this work.

\ifthenelse{\boolean{extended}}{
\section{Additional Utility Evaluation}\label{sec:appendix_util_eval}

We evaluate our protocol with three metrics: Silhouette Score~\cite{rousseeuw198753}, Davies-Bouldin Index (DBI)~\cite{daviesbouldin1979}, and Mean Squared Error (MSE). MSE is the average squared distance between output centroids and ground-truth centroids (from k-means++ matched via the Hungarian algorithm). Because MSE is sensitive to outliers, methods without post-processing (e.g.\ \suProtocol) can produce extreme MSE values, whereas \ourProtocol~remains close to the non-private baseline. For any single-cluster result, we set Silhouette Score to -1 and DBI to $\infty$; accordingly, infinite DBI values are omitted in their figure. Otherwise, all three metrics exhibit similar trends as reported in Section \ref{sec:eval} of the main paper.

\generateMetricFigureGroup{Silhouette}{Silhouette Score}

\generateMetricFigureGroup[false]{Davies}{Davies-Bouldin Index}

\generateMetricFigureGroup[false]{MSE}{Mean Squared Error (Centroid Alignment)}
}{}

\end{document}